\DeclareSymbolFont{letters}{OML}{txmi}{m}{it} 
\newtheorem{defn}{Definition}[section]
\newtheorem{lem}[defn]{Lemma}
\newtheorem{thm}[defn]{Theorem}
\newtheorem{exmp}[defn]{Example}
\newcommand{\Nat}{\textit{I\hspace{-.5ex}N}}
\newcommand{\sosrule}[2]{\frac{\raisebox{.7ex}{\normalsize{$#1$}}}
                        {\raisebox{-1.0ex}{\normalsize{$#2$}}}}
\newcommand{\bisim}{\;\underline{\hspace*{-0.15ex}
                        \leftrightarrow\hspace*{-0.15ex}}\;}
\def\lparal{\mathbin{\setbox0=\hbox{$\|$}%
        \dimen0=\dp0 \advance\dimen0 -1.5pt \dp0=\dimen0%
        \underline{\kern-1.5pt\box0\kern1.5pt}}}
\newcommand{\rel}{\mathcal{R}}
\newcommand{\contents}[1]{\noindent$\blacktriangleright$ #1 \\}
\newcommand{\ie}{\textit{i.e.}\xspace }
\newcommand{\eg}{\textit{e.g.}\xspace }
\newcommand{\viz}{\textit{viz.}\xspace }
\newcommand{\ibid}{\textit{ibid.}\xspace }
\newcommand{\isdef}{=}
\newcommand{\ap}{{:}}
\renewcommand{\le}{\mathrel{\leqslant}}
\newcommand{\tle}{\ensuremath{\mathrel{\trianglelefteqslant}}}
\newcommand{\bnd}[1]{\ensuremath{\mathsf{bnd}(#1)}}
\newcommand{\occ}[1]{\ensuremath{\mathsf{occ}(#1)}}
\newcommand{\normalise}[1]{\ensuremath{\mathsf{norm}(#1)}}
\newcommand{\reduce}[1]{\ensuremath{\mathsf{reduce}(#1)}}
\newcommand{\bool}{\ensuremath{\mathbb{B}}}
\newcommand{\true}{\ensuremath{\mathsf{true}}}
\newcommand{\false}{\ensuremath{\mathsf{false}}}
\newcommand{\mc}[1]{\ensuremath{\mathcal{#1}}}
\newcommand{\rank}[1]{\ensuremath{\mathsf{rank}(#1)}}
\newcommand{\cad}[3]{\ensuremath{\mathsf{rank}_{#1,#2}(#3)}}
\newcommand{\ah}[1]{\ensuremath{\mathsf{ah}(#1)}}
\newcommand{\sem}[1]{\ensuremath{[\![ #1 ]\!]}}
\newcommand{\up}{\blacktriangle}
\newcommand{\down}{\blacktriangledown}
\title{Analysis of Boolean Equation Systems\\ through Structure Graphs}
\author{
Michel A. Reniers
\quad Tim A.C. Willemse
\institute{Department of Computer Science, Eindhoven University of Technology (TU/e),
\\ {P.O.~Box~513}, NL-5600~MB~~Eindhoven, The Netherlands
}
}
\begin{document}
\maketitle

\bibliographystyle{plain}

\begin{abstract}
We analyse the problem of solving Boolean equation systems through the use
of \emph{structure graphs}. The latter are obtained through an elegant
set of Plotkin-style deduction rules.  Our main contribution is that
we show that equation systems with bisimilar structure graphs have the
same solution.  We show that our work conservatively extends earlier work,
conducted by Keiren and Willemse, in which \emph{dependency graphs} were
used to analyse a subclass of Boolean equation systems, \viz, equation
systems in \emph{standard recursive form}. We illustrate our approach by
a small example, demonstrating the effect of simplifying an equation system
through minimisation of its structure graph.

\end{abstract}

\section{Introduction}\label{Sect:intro}

\emph{Boolean equation systems (BESs)}~\cite{Lar:92,Mad:97} essentially
consist of sequences of fixed-point equations in the Boolean lattice.
Their merit is in their use for solving a variety of advanced
verification problems in a uniform manner, \viz, by solving the equation
system itself; such problems include local and global model checking
problems, see \eg~\cite{Mad:97} and equivalence checking problems,
see~\cite{Mat:03,CPPW:07}. Through dedicated encodings that act on a
combination of, \eg, labeled transition systems and temporal formulae,
equation systems encoding a particular verification problem can be
obtained efficiently, \ie, in polynomial time.  The size of the resulting
equation system is dependent on the input and the verification problem:
the $\mu$-calculus model checking problem, for instance, yields equation
systems of size $\mc{O}(n f)$, where $n$ is the size of the state space
and $f$ the size of the modal formula. As a result, equation systems
can suffer from a phenomenon akin to the state explosion problem.

Solving an equation system is known to be a computationally hard problem:
it is in $\text{NP} \cap \text{co-NP}$, see, \eg~\cite{Mad:97}; in fact,
Jurdzi\'nski showed that it is in $\text{UP} \cap \text{co-UP}$, see~\cite{Jur:98}.
Currently, the most efficient algorithm for solving equation
systems (at least from a theoretical stance), is the \emph{bigstep}
algorithm\footnote{Technically, this algorithm is used to compute
the set of winning states for a player in a \emph{Parity Game},
but this problem is equivalent to the problem of solving an equation
system.} due to Schewe~\cite{Schewe2007}. This algorithm has run-time
complexity $\mc{O}(n\ m^{\ah{\mc{E}}/3})$, where $n$ corresponds to
the number of equations in an equation system $\mc{E}$, $m$ to the
cumulative size of the right-hand sides of these, and $\ah{\mc{E}}$
to the number of alternations of fixed-point signs in the equation
system. This run-time complexity provides a practical motivation for
investigating methods for efficiently reducing the size of these
parameters.  In the absence of notions such as a behaviour of an
equation system, an unorthodox strategy in this setting is the use of
bisimulation-inspired minimisation techniques.  Nevertheless, recent work
by Keiren and Willemse~\cite{KeirenWillemse2009a} demonstrates that two
such minimisations are not only theoretically but also practically very
cost-effective:  they yield massive reductions of the size of equation
systems, they do not come with memory penalties, and the time required
for solving the original equation system significantly exceeds the
time required for minimisation and subsequent solving of the minimised
equation system.

In \ibid, the minimisations are only obtained for a strict subclass of
equation systems, \viz, equation systems in \emph{standard recursive form
(SRF)}. The minimisation technique relies on bisimulation minimisations of
\emph{dependency graphs}~\cite{Kei:06,KeirenWillemse2009a} underlying the
equation systems in SRF.  These graphs basically reflect the (possibly
mutual) dependencies of the equations in an equation system in SRF.
While from a practical viewpoint, the class of equation systems in
SRF does not pose any limitations to the applicability of the method
(every equation system can be brought into SRF without changing the
solution to the proposition variables of the original equation system),
the transformation comes at the cost of a blow-up in size. While
this blow-up is only polynomial in size, its effects on the minimising
capabilities were thus far not clear. As a result of our developed theory
we are able to show that the reduction to SRF does not adversely affect
the minimising capabilities of strong bisimulation. This follows from
the fact that bisimilarity on structure graphs is a congruence for
\emph{normalisation}, \ie, an operation that transforms an equation
system into SRF.  More importantly, the required transformation into
SRF complicates the development of meta-theory for equation systems.
For instance, it hinders addressing questions such as whether the
minimisation of equation systems is always favourable over minimising
input specifications prior to encoding the problem as equation systems.

The main problem in generalising the results that are obtained through
the analysis of dependency graphs is that it is hard to elegantly capture
the structure of an equation system, without resulting in a parse-tree of
the equation system. In addition, the arbitrary nesting levels of Boolean
operators in equation systems complicates a straightforward definition of
bisimilarity for such general equation systems. We solve these issues
by using a set of deduction rules in Plotkin style~\cite{Plotkin04a}
to map the equation systems onto \emph{structure graphs}. The latter
generalise dependency graphs by dropping the requirement that each node
necessarily represents a proposition variable occurring at the left-hand
side of some equation and adding facilities for reasoning about Boolean
constants $\true$ and $\false$. Motivated by computational complexity,
in defining our deduction rules, we necessarily must leverage between
\emph{simplicity} and \emph{coarseness}. This is achieved by choosing to
support only rules of commutativity and associativity of the Boolean
operators, and not, \eg, distributivity and absorption rules. The
rationale behind this choice is that commutativity and associativity,
which are hard-coded in equation systems in SRF (and therefore in their
underlying dependency graphs) have proven to be sufficiently powerful
for obtaining reductions from an arbitrary number of equations to a single equation.

\paragraph{Related Work.}
Various types of graphs for Boolean equation systems have appeared in the
literature. In~\cite{Mad:97}, Mader considers dependency graphs consisting
of nodes representing equations and edges representing the fact that
one equation depends on the value of another equation. The structure
of the right-hand sides of the equations can in no sense be captured by
these graphs. Kein\"anen~\cite{Kei:06} extends the dependency graphs of
Mader by decorating the nodes with at most one of the Boolean operators
$\wedge$ and $\vee$, and, in addition, a natural number that abstractly
represents the fixed-point sign of the equation.  However, the dependency
graphs of \ibid, only allow for capturing equation systems in SRF. Keiren
and Willemse~\cite{KeirenWillemse2009a} use these dependency graphs to
investigate two notions of bisimulation, \viz, \emph{strong bisimulation},
and a weakened variation thereof, called \emph{idempotence-identifying bisimulation},
and their theoretical and practical use for minimising equation systems.
The dependency graphs of~\cite{Kei:06,KeirenWillemse2009a}, in turn,
are closely related to \emph{Parity Games}, in which players aim to
win an infinite game. It has been shown that the latter problem is
equivalent to solving a Boolean equation system.  Simulation relations for Parity Games have been studied in, among others~\cite{FW:06}. Finally, we mention the framework of \emph{Switching
Graphs}~\cite{GP:09}, which have two kinds of edges: ordinary edges and
\emph{switches}, which can be set to one of two destinations. Switching
Graphs are more general than dependency graphs, but are still inadequate for
directly capturing the structure of the entire class of Boolean equation
systems. Note that in this setting, the \emph{$v$-parity loop problem}
is equivalent to the problem of solving Boolean equation systems.

\paragraph{Outline.} For completeness, we provide a brief
overview of the formal framework of Boolean equation systems in
Section~\ref{Sect:preliminaries}.  Section~\ref{Sect:Structure_Graphs}
subsequently introduces the concept of a structure
graph and presents deduction rules for generating these
from an equation system. Our main results are presented in
Section~\ref{Sect:Preservation}, and an application thereof can be
found in Section~\ref{Sect:Application}. Section~\ref{Sect:Conclusions}
finishes with concluding remarks.

\section{Preliminaries}\label{Sect:preliminaries}

A Boolean equation system is a finite sequence of least and greatest
fixed point equations, where each right-hand side of an equation is a
proposition in positive form. For an in-depth treatment of the
associated theory for model checking the modal $\mu$-calculus,
we refer to~\cite{Mad:97}. In the remainder of this section, we outline
only the theory that is required for understanding the results obtained
in this paper.

\begin{defn}
A \emph{Boolean equation system (BES)} $\mc{E}$ is defined by the following grammar:
\[
\begin{array}{ll}
\mc{E} & ::= \epsilon ~|~ (\sigma X = f)\ \mc{E}\\
f,g    & ::= c ~|~ X ~|~ f \vee g ~|~ f \wedge g
\end{array}
\]
where  $\epsilon$ is the empty BES, $\sigma {\in} \{\mu,\nu \}$ is a fixed point
symbol, $X$ is a proposition variable taken from some set $\mc{X}$,
$f$ is a proposition formula and $c$ is either constant $\true$ or $\false$.
\end{defn}
For any equation system $\mc{E}$, the set of \emph{bound proposition variables},
$\bnd{\mc{E}}$, is the set of variables occurring at the left-hand side of
some equation in $\mc{E}$. The set of \emph{occurring proposition variables},
$\occ{\mc{E}}$, is the set of variables occurring at the right-hand side of
some equation in $\mc{E}$.
\begin{align*}
\bnd{\epsilon} & \isdef \emptyset &
\bnd{(\sigma X = f)~\mc{E}} & \isdef \bnd{\mc{E}} \cup \{X\}\\
  \occ{\epsilon} & \isdef \emptyset &
  \occ{(\sigma X = f)~\mc{E}} & \isdef \occ{\mc{E}} \cup \occ{f}
\end{align*}
where $\occ{f}$ is defined inductively as follows:
\begin{align*}
  \occ{c} & \isdef \emptyset  & \occ{X}     & \isdef \{X\}   \\
  \occ{f \vee g} & \isdef \occ{f} \cup \occ{g} &
  \occ{f \wedge g} & \isdef \occ{f} \cup \occ{g}
\end{align*}
We say that an equation system $\mc{E}$ is \emph{closed} whenever
$\occ{\mc{E}} \subseteq \bnd{\mc{E}}$.   Intuitively, a (closed) equation
system uniquely assigns truth values to its bound proposition variables,
provided that every bound variable occurs only at the left-hand side
of a single equation in an equation system. In such a case, we call
the equation system \emph{well-formed}. As usual, we only consider
well-formed equation systems.  Well-formedness enables us to define
an ordering $\tle$ on bound variables of an equation system $\mc{E}$,
denoted $X \tle X'$, indicating that the equation for $X$ precedes the
equation for $X'$ in the equation system $\mc{E}$.\\

Formally, proposition formulae are interpreted in a context of an
\emph{environment} $\eta \ap \mc{X} \to \bool$. For an
arbitrary environment $\eta$, we write $\eta [X:=b]$ for the environment
$\eta$ in which the proposition variable $X$ has
Boolean value $b$ and all other proposition variables $X'$ have value $\eta(X')$.
Note that, for reading ease, we do not formally distinguish
between a semantic Boolean value and its representation by $\true$ and $\false$;
likewise, for the operands $\wedge$ and $\vee$.
\begin{defn}\label{def:solution_es}
Let $\eta \ap \mc{X} \to \bool$ be an environment.  The
\emph{interpretation} $\sem{f}{\eta}$ maps a proposition formula $f$
to $\true$ or $\false$:
\begin{align*}
\sem{c}{\eta} &\isdef c & \sem{X}{\eta}      &\isdef \eta(X) \\
\sem{f \vee g}{\eta} &\isdef \sem{f}{\eta} \vee \sem{g}{\eta} &
\sem{f \wedge g}{\eta} &\isdef \sem{f}{\eta} \wedge \sem{g}{\eta}
\end{align*}
The \emph{solution of a BES}, given an environment $\eta$,
is inductively defined as follows:
\[ \begin{array}{lcl}
\sem{\epsilon}{\eta}  & \isdef & \eta \\

\sem{( \sigma X = f )\ \mc{E}}{\eta} & \isdef &
\left \{
\begin{array}{ll}
  \sem{\mc{E}}{ (\eta [X :=  \sem{f}{(\sem{\mc{E}}{\eta[X := \false]})}])}
&  \text{ if $\sigma = \mu$} \\
  \sem{\mc{E}}{ (\eta [X :=  \sem{f}{(\sem{\mc{E}}{\eta[X := \true]})}])}
&  \text{ if $\sigma = \nu$} \\
\end{array}
\right .
\end{array}
\]
\end{defn}
The tree-like recursive definition of a solution makes it intricately
complex. On the one hand, it can be shown that a solution to an equation
system still verifies every equation (in the sense that the value at the
left-hand side is logically equivalent to the value at the right-hand side
of the equation).  At the same time, the fixed-point signs of left-most
equations \emph{outweigh} the fixed-point signs of those equations
that follow, \ie, the fixed-point signs of leftmost equations are more
important. As a consequence, the solution is order-sensitive:
the solution to $(\mu X = Y)\ (\nu Y = X)$, yielding all $\false$,
differs from the solution to $(\nu Y = X)\ (\mu X = Y)$, yielding all
$\true$.

Closed equation systems enjoy the property that the solution to the
equation system is independent of the environment in which it is defined,
\ie, for all environments $\eta,\eta'$, we have $\sem{\mc{E}}{\eta}(X) =
\sem{\mc{E}}{\eta'}(X)$ for all $X \in \bnd{\mc{E}}$.  For this reason,
we henceforth refrain from writing the environment explicitly in all
our considerations dealing with closed equation systems, \ie, we write
$\sem{\mc{E}}$, and $\sem{\mc{E}}(X)$ instead of the more verbose
$\sem{\mc{E}}{\eta}$ and $\sem{\mc{E}}{\eta}(X)$.

An academic example illustrating the typical purpose of equation systems
is given below.
\begin{exmp}
Consider the labeled transition system (depicted below), modelling
mutual exclusion between two readers and a single writer.

\begin{center}
\begin{tikzpicture}[->,>=stealth',node distance=50pt]
\tikzstyle{every state}=[minimum size=15pt, inner sep=2pt, shape=circle]

\node [state,accepting] (naught) {\small $s_0$};
\node [state] (one) [right of=naught] {\small $s_1$};
\node [state] (two) [right of=one] {\small $s_2$};
\node [state] (three) [left of=naught] {\small $s_3$};

\draw
  (naught) edge[bend left] node [above] {\small $r_s$} (one)
  (one) edge[bend left] node [above]    {\small $r_s$} (two)
  (two) edge[bend left] node [above]    {\small $r_e$} (one)
  (one) edge[bend left] node [above]    {\small $r_e$} (naught)
  (naught) edge[bend left] node [above] {\small $w_s$} (three)
  (three) edge[bend left] node [above]  {\small $w_e$} (naught);
\end{tikzpicture}
\end{center}

\noindent
Reading is started using an action $r_s$
and action $r_e$ indicates its termination. Likewise for writing.
The verification problem $\nu X. \mu
Y. ~\langle r_s\rangle X \vee \langle \overline{r_s}\rangle Y$,
modelling that on some path, a reader can infinitely often start reading,
translates to the following equation
system:

$$
\begin{array}{l}
(\nu X_{s_0} = Y_{s_0})\
(\nu X_{s_1} = Y_{s_1})\
(\nu X_{s_2} = Y_{s_2})\
(\nu X_{s_3} = Y_{s_3})\\
(\mu Y_{s_0} = X_{s_1} \vee Y_{s_1})\
(\mu Y_{s_1} = X_{s_2} \vee Y_{s_0})\
(\mu Y_{s_2} = Y_{s_1})\
(\mu Y_{s_3} = Y_{s_0})
\end{array}
$$
\noindent
Observe that, like the original $\mu$-calculus formula, the resulting equation
system has mutual dependencies between $X$ and $Y$ proposition variables.
Solving the resulting equation system leads to $\true$ for all bound
variables; $X_{s_i} = \true$, for arbitrary state $s_i$,
implies that the property holds in state $s_i$.
\qed
\end{exmp}

The lemma below states that an equation $(\sigma X = f)$ in an equation system
can be moved arbitrarily close to the end in that equation system, so long as all the
proposition variables that
occur in $f$ are bound by equations that precede the equation for $X$. Moreover,
in the special case that $X \notin \occ{f}$, the fixed-point sign of the equation
for $X$ is immaterial, and can thus be changed at will.
\begin{lem}
\label{lem:moving_equation}
Let $\sigma'$ denote
an arbitrary fixed-point sign. If $\occ{f} \cap
\bnd{(\sigma X = f)\ \mc{E}_1\ \mc{E}_2} = \emptyset$, then for all environments $\eta$:
\[
\sem{\mc{E}_0\ (\sigma X = f)\ \mc{E}_1\ \mc{E}_2}{\eta}
=
\sem{\mc{E}_0\ \mc{E}_1\ (\sigma' X = f)\ \mc{E}_2}{\eta}
\]
\end{lem}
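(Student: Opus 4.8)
The plan is to reduce the whole statement to a single structural observation: since the hypothesis $\occ{f} \cap \bnd{(\sigma X = f)\ \mc{E}_1\ \mc{E}_2} = \emptyset$ entails both $X \notin \occ{f}$ and $\occ{f} \cap \bnd{\mc{E}_1\ \mc{E}_2} = \emptyset$, the solution value of $X$ is pinned to $f$ evaluated in the incoming environment, independent of its fixed-point sign and of its position relative to $\mc{E}_1$. I would first record two auxiliary facts. Fact (i): solving an equation system never alters the value of a variable it does not bind, i.e.\ $\sem{\mc{E}}{\eta}(Y) = \eta(Y)$ whenever $Y \notin \bnd{\mc{E}}$; this is a routine induction on $\mc{E}$ directly from Definition~\ref{def:solution_es}. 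Fact (ii), the key observation: if $X \notin \occ{f}$ and $\occ{f} \cap \bnd{\mc{E}} = \emptyset$, then $\sem{f}{(\sem{\mc{E}}{\eta[X := b]})} = \sem{f}{\eta}$ for every $b \in \bool$, because $f$ only inspects variables in $\occ{f}$, on which $\sem{\mc{E}}{\eta[X := b]}$ agrees with $\eta[X := b]$ by Fact (i), and which in turn agrees with $\eta$ since $X \notin \occ{f}$. Unfolding the solution of a leading equation then yields $\sem{(\sigma X = f)\ \mc{E}}{\eta} = \sem{\mc{E}}{\eta[X := \sem{f}{\eta}]}$, a value in which $\sigma$ does not occur at all.

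Next I would isolate a congruence-for-prefixes lemma: if $\sem{\mc{A}}{\rho} = \sem{\mc{B}}{\rho}$ for all $\rho$, then $\sem{\mc{E}_0\ \mc{A}}{\eta} = \sem{\mc{E}_0\ \mc{B}}{\eta}$ for all $\eta$ and every prefix $\mc{E}_0$. This is proved by induction on $\mc{E}_0$: peeling off its leading equation $(\sigma_0 Z = h)$, the value assigned to $Z$ is computed from an inner solution of the form $\sem{\mc{E}_0'\ \mc{A}}{\cdot}$, which by the induction hypothesis coincides with $\sem{\mc{E}_0'\ \mc{B}}{\cdot}$. Since the hypothesis of the lemma does not constrain $\mc{E}_0$ in any way, this reduces the lemma to the case $\mc{E}_0 = \epsilon$, which I would then prove by induction on $\mc{E}_1$. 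The base case $\mc{E}_1 = \epsilon$ is exactly the sign change $\sem{(\sigma X = f)\ \mc{E}_2}{\eta} = \sem{(\sigma' X = f)\ \mc{E}_2}{\eta}$, and this is immediate from Fact (ii), as both sides equal $\sem{\mc{E}_2}{\eta[X := \sem{f}{\eta}]}$.

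For the inductive step $\mc{E}_1 = (\sigma_1 Y = g)\ \mc{E}_1'$ I would proceed in two moves. First, swap the two leading equations, establishing $\sem{(\sigma X = f)(\sigma_1 Y = g)\ \mc{E}_1'\ \mc{E}_2}{\eta} = \sem{(\sigma_1 Y = g)(\sigma X = f)\ \mc{E}_1'\ \mc{E}_2}{\eta}$; then apply the induction hypothesis for $\mc{E}_1'$ together with the prefix lemma (with prefix $(\sigma_1 Y = g)$) to carry $X$ the rest of the way and flip its sign to $\sigma'$. I expect the swap to be the only real subtlety, and hence the main obstacle: the equation for $X$ is pushed past the equation for $Y$, whose right-hand side $g$ \emph{may} mention $X$, so this is not a symmetric reordering and one cannot appeal to any general commutation of equations.

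The resolution is once more Fact (ii). Computing both sides of the swap by repeatedly unfolding leading equations, and writing $\mc{E}' = \mc{E}_1'\ \mc{E}_2$ and $v_f = \sem{f}{\eta}$, I would show each side reduces to $\sem{\mc{E}'}{\eta[X := v_f][Y := v_g]}$ with the \emph{same} value $v_g = \sem{g}{(\sem{\mc{E}'}{\eta[X := v_f][Y := b_1]})}$, where $b_1$ is the seed dictated by $\sigma_1$. The point is that in both arrangements $g$ is evaluated in a context in which $X$ is pinned to $v_f$: on the left this is manifest, and on the right the inner solution $\sem{(\sigma X = f)\ \mc{E}'}{\eta[Y := b_1]}$ collapses, by Fact (ii), to $\sem{\mc{E}'}{\eta[Y := b_1][X := \sem{f}{\eta[Y := b_1]}]}$, and here $\sem{f}{\eta[Y := b_1]} = v_f$ because $Y \notin \occ{f}$ (as $Y \in \bnd{\mc{E}_1}$). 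Since the updates for the distinct variables $X$ and $Y$ commute, the two reductions agree, which closes the swap and with it the induction.
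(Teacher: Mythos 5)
Your proof is correct and follows essentially the same route as the paper: a prefix-congruence lemma (the paper cites Mader's Lemma~3.14 for this) reduces to the case $\mc{E}_0 = \epsilon$, followed by induction on $\mc{E}_1$. The paper merely remarks that the inductive step is non-trivial; your swap argument, justified by pinning $X$ to $\sem{f}{\eta}$ via Fact~(ii) and using $Y \notin \occ{f}$, is a valid way to discharge it.
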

\begin{proof} Due to Lemma~3.14 of~\cite{Mad:97}, it suffices to prove the
above equivalence for $\mc{E}_0 = \epsilon$. The resulting equivalence
then follows via an induction on the length of $\mc{E}_1$. The
inductive step is non-trivial. \end{proof}

Note that a variation of the above lemma in which $X \in \occ{f}$ does
not admit a change of fixed-point sign, but, otherwise, the equivalence
still holds.
In several practical and theoretical cases, it suffices to consider
equation systems in which the right-hand sides of the equations are
of a particular shape. The following definition formally introduces
equation systems in \emph{standard recursive form}, which is used
in~\cite{KeirenWillemse2009a}. The introduced syntax takes advantage of
the fact that the semantics of proposition formulae satisfies the usual
rules of Boolean logic such as associativity and commutativity of $\wedge$
and $\vee$.
\begin{defn} A Boolean equation system
$\mc{E}$ in \emph{standard recursive form (SRF)} is defined by the following
grammar:
\[
\begin{array}{ll}
\mc{E} & ::= \epsilon ~|~ (\sigma X = f)\ \mc{E} \\
f      & ::= X ~|~ \bigvee F ~|~ \bigwedge F,
\text{~~~~where $F \subseteq \mc{X}$, with $|F| > 0$.}
\end{array}
\]
The solution to $\mc{E}$ is given by Definition~\ref{def:solution_es},
where proposition formulae in SRF are interpreted as follows (note that
we write $\eta(F)$ to indicate that $\eta$ is applied to every
variable $X$ in $F$):
\begin{align*}
\sem{X}{\eta}  &\isdef \eta(X) &
\sem{\bigwedge F}{\eta} &\isdef \bigwedge \eta(F) &
\sem{\bigvee F}{\eta} &\isdef \bigvee \eta(F)
\end{align*}

\end{defn}
Observe that every equation system $\mc{E}$ can be rewritten to an
equation system $\tilde{\mc{E}}$ in SRF such that $\sem{\mc{E}}{\eta}(X)
= \sem{\tilde{\mc{E}}}{\eta}(X)$ for all $X \in \bnd{\mc{E}}$, \ie,
the transformation to SRF preserves and reflects the solution of bound
variables.  This transformation leads to a polynomial blow-up of the
original equation system.  Lemma~\ref{lem:moving_equation} provides the
foundations for our results in Section~\ref{Sect:Preservation}, where
it underpins the soundness of \emph{normalisation}, \ie, the process of
turning an equation system into SRF.

Next, we consider the \emph{rank} of an
equation system (both standard and in SRF), and the derived notion of
the \emph{alternation hierarchy} of an equation system. The hierarchy can be thought of as
the number of syntactic alternations of fixed point signs occurring
in the equation system. Note that the alternation hierarchy is an
over-approximation of the \emph{alternation depth}, which is a measure
for the complexity of an equation system, measuring the degree of mutual
alternating dependencies. Theoretically, the alternation depth is in
many cases
smaller than the alternation hierarchy; practically, it is harder to
define and compute than the alternation hierarchy.
\begin{defn} Let $\mc{E}$ be an arbitrary equation system. The
\emph{rank} of some $X \in \bnd{\mc{E}}$, denoted
$\rank{X}$, is defined as $\rank{X} = \cad{\nu}{X}{\mc{E}}$, where
$\cad{\nu}{X}{\mc{E}}$ is defined inductively as follows:
\begin{align*}
\cad{\sigma}{X}{\epsilon} & = 0 \\
\cad{\sigma}{X}{(\sigma' Y = f) \mc{E}} & =
\left \{
  \begin{array}{ll}
  0 & \text{ if $\sigma = \sigma'$ and $X = Y$} \\
  \cad{\sigma}{X}{\mc{E}} & \text{ if $\sigma = \sigma'$ and $X \not= Y$} \\
  1+ \cad{\sigma'}{X}{(\sigma' Y = f) \mc{E}} & \text{ if $\sigma \not= \sigma'$} \\
  \end{array}
\right .
\end{align*}
The \emph{alternation hierarchy} $\ah{\mc{E}}$ is the difference between the
maximum and the minimum of the ranks of the equations of $\mc{E}$.
Observe that $\rank{X}$ is odd iff $X$ is defined in a least fixed-point equation.
\end{defn}
The following lemma states that equations with equal ranks can be
switched without affecting the solution. This result is well-known,
and follows from Beki\v c\ principle.
\begin{lem}
\label{lem:switching}
Let $\mc{E}_0\ (\sigma X = f)\ \mc{E}_1\ (\sigma' Y = g)\ \mc{E}_2$
be an arbitrary equation system with $\rank{X} = \rank{Y}$. Then for
arbitrary environment $\eta$, we have:
\[
\sem{\mc{E}_0\ (\sigma X = f)\ \mc{E}_1\ (\sigma' Y = g)\ \mc{E}_2}{\eta}
=
\sem{\mc{E}_0\ (\sigma' Y = g)\ \mc{E}_1\ (\sigma X = f)\ \mc{E}_2}{\eta}
\]
\end{lem}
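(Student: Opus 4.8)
The plan is to reduce the statement to the classical fact---Beki\v{c}'s principle---that equations belonging to a single block of equal fixed-point sign may be reordered arbitrarily without affecting the solution. The swap in the statement is precisely such a reordering, \emph{provided} that the whole segment $(\sigma X = f)\ \mc{E}_1\ (\sigma' Y = g)$ consists of equations carrying one and the same sign. The crux of the argument is therefore to extract this uniform-sign property from the hypothesis $\rank{X} = \rank{Y}$; everything else is then an appeal to a well-known principle.

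First I would establish that the map $\rank{\cdot}$ is non-decreasing along the equation system and, more precisely, that it is \emph{constant} on each maximal contiguous block of equations sharing a fixed-point sign and increases by exactly one at every alternation. This follows by inspecting the recursion for $\cad{\sigma}{X}{\mc{E}}$: scanning from the left, the clause for a matching sign with $X \neq Y$ leaves the accumulated value unchanged, whereas the clause for an alternation increments it by one (and flips the reference sign). Since in the lemma the equation for $X$ precedes that for $Y$, i.e. $X \tle Y$, and the rank never decreases, the equality $\rank{X} = \rank{Y}$ forces the rank to be constant throughout the segment and hence rules out any alternation strictly between them; consequently $\sigma = \sigma'$ and every equation appearing in $(\sigma X = f)\ \mc{E}_1\ (\sigma' Y = g)$ has sign $\sigma$ and rank $\rank{X}$. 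I expect formalising this monotonicity directly from the two-argument recursion for $\cad{\sigma}{X}{\mc{E}}$---in particular keeping track of how the reference sign in the first index evolves---to be the most delicate bookkeeping in the proof.

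With the uniform-sign property in hand, the segment $(\sigma X = f)\ \mc{E}_1\ (\sigma' Y = g)$ forms a single $\sigma$-block, whose solution is the simultaneous $\sigma$-fixed point of the induced monotone operator on $\bool$; such a simultaneous fixed point is symmetric under any permutation of its defining equations. Interchanging the equation for $X$ with that for $Y$ while leaving $\mc{E}_1$ in place is one such permutation, so the solutions of the two blocks agree. As the prefix $\mc{E}_0$ is untouched and the solution $\sem{\mc{E}}{\eta}$ is computed by processing equations from the front, the environments threaded into the common suffix $\mc{E}_2$ coincide, whence the two full equation systems have equal solutions for every $\eta$. The one step I would not grind out is the soundness of intra-block reordering itself: deriving it from the tree-like recursive definition of $\sem{\cdot}{\eta}$ is exactly Beki\v{c}'s principle, and being standard (cf.\ Mader~\cite{Mad:97}) I would cite it rather than re-derive the underlying simultaneous-fixed-point computation. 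The genuine content of the lemma thus lies in recognising, via the rank equality, that $X$ and $Y$ inhabit a common same-sign block to which that principle applies.
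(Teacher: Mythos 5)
Your proposal is correct and takes essentially the same route as the paper, which offers no proof at all beyond the remark that the result ``is well-known, and follows from Beki\v{c}'s principle.'' Your elaboration---that $\rank{X}=\rank{Y}$ forces the entire segment from the equation for $X$ to the equation for $Y$ to be a single same-sign block, because the rank is non-decreasing from left to right and increments exactly at sign alternations---is precisely the bookkeeping needed to make the appeal to Beki\v{c}'s principle legitimate, and it is sound.
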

Finally, for the purpose of comparison with the structure graphs we define
in the next section, we introduce the \emph{dependency graph}
$\langle V, \to,r, l \rangle$ as a derived
notion of an equation system $\mc{E}$
in SRF (see~\cite{KeirenWillemse2009a}), where:
\begin{itemize}
\item $V = \bnd{\mc{E}}$ is a set of nodes;
\item $\to \subseteq V \times V$ is the transition relation, defined as
$X \to Y$ iff $Y \in \occ{f}$ for $\sigma X = f \in \mc{E}$;
\item $r \ap V \to \Nat$ is the rank function,
defined as $r(X) = \rank{X}$;
\item $l \ap V \to  \{\wedge, \vee, \bot \}$ is the logic function, where
$l(X)$ is the Boolean operator in $\sigma X = f \in \mc{E}$, or
$\bot$ if there is no Boolean operator.

\end{itemize}

\section{Structure Graphs for Boolean Equation Systems}
\label{Sect:Structure_Graphs}

A large part of the complexity of equation systems is attributed to the
mutual dependencies between the equations. For closed equation systems in
SRF, these intricate dependencies are captured neatly by the dependency
graphs. For arbitrary equation systems, the situation is more complicated.
We first generalise the notion of a dependency graph to a \emph{structure
graph}, and show that the resulting structure is still adequate for
closed equation systems in SRF. We then proceed to show that arbitrary
non-empty closed equation systems can be mapped onto a structure graph.

\subsection{Structure Graphs}
\label{subsec:sg}

\begin{defn}
A structure graph is a finite, vertex-labeled graph
$\mc{G} = \langle T, t, \to, d \rangle$, where:

\begin{itemize}
\item $T$ is a finite set of proposition formulae;
\item $t \in T$ is the initial formula;
\item $\to \subseteq T \times T$ is a dependency relation;
\item $d\ap T \to (2^{D_\up} \cup 2^{D_\down} \cup 2^{D_{\top}} \cup 2^{D_{\perp}})$, where, for $e \in \{ \up, \down, \top, \perp \}$,
$D_e = \Nat \cup \{e \}$,
is a term decoration mapping;
\end{itemize}
\end{defn}
A structure graph allows for capturing the dependencies between bound
variables and (sub)formulae occurring in the equations of such bound
variables.  Intuitively, the decoration function $d$ reflects the
important information in an arbitrary equation or formula, such as the
ranks of the bound variables, and the indication that the top symbol
of a proposition formula is $\true$ (represented by $\top$), $\false$
(represented by $\perp$), a conjunction (represented by $\up$) or a
disjunction (represented by $\down$). We say some node $t$ is decorated
by some symbol $\star$ whenever $\star \in d(t)$. Our rather liberal choice for the decoration function of nodes  is motivated by possible future extensions of the theory that deal with open equation systems and complex forms of composition; we believe that sets of natural numbers are essential ingredients for accommodating such extensions. Observe that for closed equation systems, at most a single natural number would suffice. One can easily define bisimilarity
on structure graphs.

\begin{defn}
Let $\mc{G} = \langle T, t, \to, d \rangle$ and $\mc{G}' = \langle T', t', \to', d' \rangle$ be structure graphs. A relation $R \subseteq T \times T'$ is a bisimulation relation if for all $(u,u') \in R$
\begin{itemize}
\item $d(u) = d'(u')$;
\item for all $v \in T$, if $u \rightarrow v$, then $u' \rightarrow' v'$ for some $v' \in T'$ such that $(v,v') \in R$;
    \item for all $v' \in T'$, if $u' \rightarrow' v'$, then $u \rightarrow v$ for some $v \in T$ such that $(v,v') \in R$.
\end{itemize}
The structure graphs $\mc{G}$ and $\mc{G}'$ are bisimilar, notation $\mc{G} \bisim \mc{G}'$ if there exists a bisimulation relation $R$ such that $(t,t')\in R$.
\end{defn}

Next, we show how, under some mild conditions, a formula and equation system can be associated to a structure graph. Later in the paper this transformation will be used.

A structure graph $\mc{G} = \langle T,t,\to,d \rangle$ is called \emph{BESsy}~if it satisfies the following five constraints:
\begin{itemize}
\item a node $t$ decorated by $\top$ or $\perp$  has no successor w.r.t.\ $\to$.
\item a node is decorated by $\up$ or $\down$ or a rank iff it has a successor w.r.t.\ $\to$.
\item a node with multiple successors w.r.t.\ $\to$, is decorated with $\up$ or $\down$.
\item a node with rank 0 or 1 is reachable, and the ranks of all reachable nodes form a closed interval.
\item every cycle contains a node with a rank.
\end{itemize}
Observe that BESsyness is preserved under bisimilarity.  For a BESsy
structure graph $\mc{G} = \langle T,t,\to,d \rangle$ the function
$\mathit{term}$ and the partial function $\mathit{rhs}$ are defined
as follows:
\[
\begin{array}{lcl}
\mathit{term}(u) &=&
\begin{cases}
\bigsqcap \{ \mathit{term}(u') \mid u \rightarrow u' \} &
\mbox{if $d(u) = \{ \up \}$}, \\
\bigsqcup \{ \mathit{term}(u') \mid u \rightarrow u' \} &
\mbox{if $d(u) = \{ \down \}$}, \\
\true & \mbox{if $\top \in d(u)$}, \\
\false & \mbox{if $\perp \in d(u)$}, \\
X_{u} & \mbox{otherwise}, \\
\end{cases}
\\
\mathit{rhs}(u) &=&
\begin{cases}
\bigsqcap \{ \mathit{term}(u') \mid u \rightarrow u' \} &
\mbox{if $\up \in d(u)$}, \\
\bigsqcup \{ \mathit{term}(u') \mid u \rightarrow u' \} &
\mbox{if $\down \in d(u)$}, \\
\mathit{term}(u') & \mbox{otherwise, where $u'$ is such that $u \rightarrow u'$}.
\end{cases}
\end{array}
\]
In the definition of the functions $\mathit{term}$ and $\mathit{rhs}$,
the symbols $\bigsqcap$ and $\bigsqcup$ are used as a shorthand for
a nested application of $\land$ and $\lor$. Let $\lessdot$ be a total
order on $\mc{X} \cup \{\true,\false\}$.  Assuming that $\lessdot$
is lifted to a total ordering on formulae, we define for formula $t$
smaller than all formulae in $T$ w.r.t.\ $\lessdot$
\[ \bigsqcap \{ t \} = t \qquad \bigsqcap (\{t\}\cup T) = t \land \left(\bigsqcap T\right)
 \qquad
   \bigsqcup \{ t \} = t \qquad \bigsqcup (\{t\}\cup T) = t \lor \left(\bigsqcup T\right)
\]

\begin{defn}
\label{transSGtoBES}
Let $\mc{G} = \langle T,t,\to,d \rangle$ be a BESsy structure graph.
The formula (and equation system) associated to $\mc{G}$, denoted $\mc{E}_{\mc{G}}$ is the formula $\mathit{term}(t)$ in the context of the equation system $\mc{E}$ defined below.
To each node $u \in T$ such that $d(u) \cap \Nat \neq \emptyset$, we associate an equation of the form \[ \sigma X_u = rhs(u) \] where
$\sigma$ is $\mu$ in case the maximal rank, provided it exists, associated to the node is odd, and $\nu$ otherwise. The equation system $\mc{E}$ is obtained by ordering the equations from left-to-right based on the ranks of the variables.
\end{defn}

\subsection{Structure graphs for equation systems in SRF}

Next, for every formula (not only the variables) in the context of an equation system $\mc{E}$ in SRF, we define the dependency relation and the decorations of formulae denoted by the transition relation $\_\rightarrow\_$ and the predicates $\_\top$, $\_\perp$, $\_\up$, $\_\down$, and $\_\pitchfork n$. It should be noted that for decorating the proposition variables with the rank we use the function $\mathsf{rank}$ that is defined before.
By means of the following deduction rules a structure graph is associated to each formula given a non-empty equation system $\mc{E}$ in SRF:
\[
\sosrule{\sigma X = \bigwedge F \in \mc{E}}{X \up}
\qquad
\sosrule{\sigma X = \bigvee F \in \mc{E}}{X \down}
\qquad
\sosrule{\sigma X = f \in \mc{E} \quad Y \in \occ{f}}{X \rightarrow Y}
\qquad
\sosrule{X\in\bnd{\mc{E}} \quad \rank{X} = n}{X \pitchfork n}\]

\[
\sosrule{}{\bigwedge F \up}
\qquad
\sosrule{}{\bigvee F \down}
\qquad
\sosrule{X \in F}{\bigwedge F \rightarrow X}
\qquad
\sosrule{X \in F}{\bigvee F \rightarrow X}
\]

The structure graph associated to a formula $t$ in the context of an equation system $\mc{E}$ is denoted $\mc{G}_{\mc{E},t}$.
For an equation system $\mc{E}$, let $X \in \bnd{\mc{E}}$ be the least element w.r.t.\ $\tle$. Then, the structure graph associated to $\mc{E}$, denoted by $\mc{G}_{\mc{E}}$, is the structure graph of the variable $X$ in the context of $\mc{E}$.
Structure graphs obtained from the SOS for Boolean equation systems in SRF satisfy the following restrictions.

\begin{lem}\label{lem:SGBESSRF}
Let $\mc{E}$ be a non-empty closed Boolean equation system in SRF and let $\mc{G}_{\mc{E}} = \langle T, X, \to, d \rangle$ be the structure graph associated to $\mc{E}$.
\begin{enumerate}
\item All nodes correspond to propositional variables: For all $t \in T$, we have
$t \in \bnd{\mc{E}}\cup\occ{\mc{E}}$;
\item A node is ranked iff it is a bound variable: For all $t \in T$, we have $d(t) \cap \Nat \neq\emptyset$ iff $t \in \bnd{\mc{E}}$;
\item At most one rank is assigned to a node: For all $t \in T$, we have $|d(t) \cap \Nat| \le 1$.
\end{enumerate}
\end{lem}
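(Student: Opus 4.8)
The plan is to reason directly from the deduction rules, exploiting the fact that in SRF every right-hand side $f$ satisfies $\occ{f} \subseteq \mc{X}$ and that closedness gives $\occ{\mc{E}} \subseteq \bnd{\mc{E}}$. Throughout I would treat $T$ as the set of formulae reachable via $\to$ from the initial node, which by the definition of $\mc{G}_{\mc{E}}$ is the $\tle$-least bound variable $X$, and I would read $d(t)$ as collecting exactly those symbols and naturals $s$ for which the predicate $t\,s$ is derivable from the rules.

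The heart of the argument is a reachability invariant establishing part~1, which I would prove by induction on the length of a derivation path from $X$. For the base case, the initial node $X$ lies in $\bnd{\mc{E}}$. For the inductive step, suppose a reachable node $t$ is a variable $Y$; I must show that every $\to$-successor of $Y$ is again a bound variable. The only deduction rule whose conclusion has the shape $Y \to \cdot$ with a variable on the left is $\sigma Y = f \in \mc{E},\ Z \in \occ{f} \Rightarrow Y \to Z$ --- the remaining two transition rules have a compound formula $\bigwedge F$ or $\bigvee F$ as their source and therefore cannot fire for a variable. Since $\mc{E}$ is in SRF, $\occ{f} \subseteq \mc{X}$, so $Z$ is a variable, and since $\mc{E}$ is closed, $Z \in \occ{\mc{E}} \subseteq \bnd{\mc{E}}$. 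Hence every successor of a variable is a bound variable, and by induction every node of $T$ is a variable in $\bnd{\mc{E}} \cup \occ{\mc{E}}$ (the initial node in $\bnd{\mc{E}}$, every other node in $\occ{\mc{E}}$ as the target of some transition). A useful by-product is that no compound formula $\bigwedge F$ or $\bigvee F$ ever occurs as a node, so the rules decorating such formulae are vacuous here.

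Parts~2 and~3 then follow by inspecting the only rule that produces a natural-number decoration, namely $X \in \bnd{\mc{E}},\ \rank{X} = n \Rightarrow X \pitchfork n$. For part~2, if $d(t) \cap \Nat \neq \emptyset$ then $t \pitchfork n$ is derivable for some $n$, which is possible only through this rule, forcing $t \in \bnd{\mc{E}}$; conversely, for any $t \in \bnd{\mc{E}}$ the rule applies with $n = \rank{t}$, so $\rank{t} \in d(t)$. For part~3, since $\rank{\cdot}$ is a function on $\bnd{\mc{E}}$, the side condition $\rank{t} = n$ is satisfied by a unique $n$, so at most one natural number can decorate any node.

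The step I expect to be the main obstacle --- or at least the one demanding the most care --- is the reachability invariant underlying part~1: one must argue explicitly that the transition rules for the compound formulae $\bigwedge F$ and $\bigvee F$ can never introduce such a formula as a \emph{target} of $\to$, so that, starting from a variable, the reachable set stays within $\mc{X}$. Everything else, including the uniqueness of ranks, is then an immediate consequence of the syntactic form of the deduction rules together with the SRF and closedness hypotheses.
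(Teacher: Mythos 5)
Your proof is correct and takes essentially the same route as the paper, which simply states that the properties ``follow easily from the deduction rules''; your reachability induction for part~1 and the rule-inspection for parts~2 and~3 are exactly the details that remark leaves implicit. The one assumption you make explicit --- that $T$ consists of the nodes reachable from the $\tle$-least bound variable --- is also the reading the paper's lemma relies on, so nothing is missing.
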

\begin{proof}
These properties follow easily from the deduction rules.
\end{proof}

\begin{lem}
For a non-empty closed equation system $\mc{E}$ in SRF the structure graph $\mc{G}_{\mc{E}}$ is isomorphic to the dependency graph defined for it in \cite{KeirenWillemse2009a}.
\end{lem}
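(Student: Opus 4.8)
The plan is to exhibit an explicit isomorphism whose underlying bijection is the identity on proposition variables, and then to check separately that it respects the transition relation, the rank function, and the logic function. First I would fix the node sets. By Lemma~\ref{lem:SGBESSRF}(1) every node $t$ of $\mc{G}_{\mc{E}}$ satisfies $t \in \bnd{\mc{E}} \cup \occ{\mc{E}}$, and since $\mc{E}$ is closed this gives $t \in \bnd{\mc{E}}$; hence $T \subseteq \bnd{\mc{E}} = V$. Reading the dependency graph of~\cite{KeirenWillemse2009a} as pointed at the least element $X$ w.r.t.\ $\tle$ (the same root that generates $\mc{G}_{\mc{E}}$), I would take the isomorphism to be the identity on the variables reachable from $X$, using Lemma~\ref{lem:SGBESSRF}(2) to see that the reachable nodes are exactly the bound (hence ranked) variables, so that the two node sets coincide.

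Next I would verify that the identity is a morphism for the edge relations. This step is immediate from the definitions: in $\mc{G}_{\mc{E}}$ the deduction rule deriving $X \rightarrow Y$ fires exactly when $\sigma X = f \in \mc{E}$ and $Y \in \occ{f}$, which is verbatim the definition of $X \to Y$ in the dependency graph. Since in SRF every right-hand side is of the form $X$, $\bigvee F$ or $\bigwedge F$, the target of an edge out of a variable is always another variable, so the auxiliary rules for $\bigwedge F$ and $\bigvee F$ never introduce composite nodes into the reachable graph and contribute no additional edges.

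I would then match the decoration $d$ with the pair $(r,l)$. For the rank, the rule deriving $X \pitchfork n$ with $n = \rank{X}$, together with Lemma~\ref{lem:SGBESSRF}(3), gives $d(X) \cap \Nat = \{\rank{X}\}$, which is precisely $r(X)$. For the logic function, the rules decorate $X$ with $\up$ exactly when its right-hand side is $\bigwedge F$, with $\down$ exactly when it is $\bigvee F$, and with neither when the right-hand side is a single variable; this corresponds to $l(X) = \wedge$, $l(X) = \vee$ and $l(X) = \bot$ respectively. Thus $d(X) = \{r(X)\} \cup S$ with $S = \{\up\}$, $S = \{\down\}$ or $S = \emptyset$ in the three cases, so $d$ encodes exactly $(r,l)$ under the translation $\up \mapsto \wedge$, $\down \mapsto \vee$, $\emptyset \mapsto \bot$.

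I expect the main obstacle to be bookkeeping rather than depth: reconciling the node sets of the two graphs (the structure graph is generated from the root $X$ and therefore contains only the reachable variables, whereas $V = \bnd{\mc{E}}$ lists all bound variables), and handling cleanly the degenerate case $|F| = 1$, where $\bigwedge F$ and $\bigvee F$ collapse to a bare variable and one must adopt a consistent convention so that the $\up/\down$ decoration agrees with $l$. Both points are settled by treating the two structures as pointed graphs rooted at the least variable and appealing to Lemma~\ref{lem:SGBESSRF}, after which the isomorphism follows from the routine case analysis above.
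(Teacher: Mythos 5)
Your proposal is correct and follows essentially the same route as the paper: the paper's proof is the one-line observation that the properties of Lemma~\ref{lem:SGBESSRF} precisely characterise the dependency graphs of~\cite{KeirenWillemse2009a}, and your argument is simply that observation unpacked --- identity on nodes via properties (1) and (2), verbatim agreement of the edge rules, and the matching of $d$ with $(r,l)$ via property (3) and the $\up/\down$ rules. Your explicit flagging of the reachability mismatch between $T$ and $V=\bnd{\mc{E}}$ is, if anything, more careful than the paper itself.
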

\begin{proof}
The properties given in Lemma~\ref{lem:SGBESSRF} precisely characterise the dependency graphs from \cite{KeirenWillemse2009a}.
\end{proof}

\subsection{Structure graphs for non-empty closed equation systems}

Next, we define structure graphs not for the class of equation systems in SRF but for arbitrary closed non-empty equation systems. First, as before, nodes representing bound propositional variables are labeled by a natural number representing the rank of the variable in the equation system:

\[ \sosrule{X\in\bnd{\mc{E}} \quad \rank{X} = n}{X \pitchfork n}\ . \]

A clear difference between equation systems in SRF and the more general
class of equation systems is that in the latter only a binary version
of conjunction and disjunction is available. A question that needs
to be answered is ``How to capture this structure in the structure
graph?'' One way of doing so would be to precisely reflect the structure
of the right-hand side. For a right-hand side of the form $X \land (Y
\land Z)$ this results in the structure graph depicted below (left),
where we assume that the ranks of the variables $X$, $Y$, and $Z$
are 1, 2, and 3, respectively:

\begin{tabular}{c@{\hspace*{1cm}}c}
\begin{tikzpicture}[->,>=stealth',node distance=50pt]
\tikzstyle{every state}=[draw=none,minimum size=20pt, inner sep=2pt, shape=rectangle]

\node[state] (f) {$X \land (Y \land Z)\ \text{\footnotesize{$\up$}}$};
\node[state] (fake) [right of=f] {};
\node[state] (X) [below of=f] {$X\ \text{\footnotesize{$1$}}$};
\node[state] (sf) [right of=fake] {$Y \land Z\ \text{\footnotesize{$\up$}}$};
\node[state] (Y) [below of=sf] {$Y\ \text{\footnotesize{$2$}}$};
\node[state] (Z) [right of=sf] {$Z\ \text{\footnotesize{$3$}}$};

\draw (f) edge (X)
      (f) edge (sf)
      (sf) edge (Y)
      (sf) edge (Z);
\end{tikzpicture}
&
\begin{tikzpicture}[->,>=stealth',node distance=50pt]
\tikzstyle{every state}=[draw=none,minimum size=20pt, inner sep=2pt, shape=rectangle]

\node[state] (f) {$\bigwedge \{X,Y,Z\}\ \text{\footnotesize{$\up$}}$};
\node[state] (X) [below of=f] {$X\ \text{\footnotesize{$1$}}$};
\node[state] (Y) [right of=f] {$Y\ \text{\footnotesize{$2$}}$};
\node[state] (Z) [below of=Y] {$Z\ \text{\footnotesize{$3$}}$};

\draw (f) edge (X) edge (Y) edge (Z);
\end{tikzpicture}
\end{tabular}

A drawback of this solution is that, in general, the logical equivalence
between $\bigwedge \{ X,Y,Z \}$ and the formula $X \land (Y \land Z)$
is not reflected by bisimilarity. Retaining this logical equivalence
(and hence associativity and commutativity) of both conjunction and
disjunction is desirable to approximate the power of dependency graphs
in reducing w.r.t.\ bisimilarity.

Another syntactic difference between equation systems in SRF described in
\cite{KeirenWillemse2009a} and the more general class of equation systems
discussed in this paper is that the logical connectives for conjunction
($\land$) and disjunction ($\lor$) may occur nested in the right-hand
side of the same Boolean equation. This is solved by reflecting a change
in leading operator in the structure graph. So the anticipated structure
of the structure graph for $X \land (Y \land (Z \lor X))$, where, again,
we assume that the ranks of the variables $X$, $Y$, and $Z$ are 1, 2,
and 3, respectively, is:

\begin{tikzpicture}[->,>=stealth',node distance=50pt]
\tikzstyle{every state}=[draw=none,minimum size=20pt, inner sep=2pt, shape=rectangle]

\node[state] (f) {$X \land (Y \land (Z \lor X))\ \text{\footnotesize{$\up$}}$};
\node[state] (X) [below of=f] {$X\ \text{\footnotesize{$1$}}$};
\node[state] (sf) [below of=sf] {$Z \lor X\ \text{\footnotesize{$\down$}}$};
\node[state] (Z) [right of=sf] {$Z\ \text{\footnotesize{$3$}}$};
\node[state] (Y) [above of=Z] {$Y\ \text{\footnotesize{$2$}}$};

\draw (f) edge (X)
      (f) edge (Y)
      (f) edge (sf)
      (sf) edge (Z) edge (X);
\end{tikzpicture}

This can be elegantly achieved by means of the following deduction rules for the decorations and the dependency transition relation $\rightarrow$:
\[
\sosrule{}{\true \top}
\qquad
\sosrule{}{\false \perp}
\qquad
\sosrule{}{t \land t'\up}
\qquad
\sosrule{}{t \lor t'\down}
\qquad
\sosrule{t\up \quad t \rightarrow u}{t \land t' \rightarrow u}
\qquad
\sosrule{t'\up \quad t' \rightarrow u'}{t \land t' \rightarrow u'} \]
\[
\sosrule{t\down \quad t \rightarrow u}{t \lor t' \rightarrow u}
\qquad
\sosrule{t'\down \quad t' \rightarrow u'}{t \lor t' \rightarrow u'}
\qquad
\sosrule{\neg t \up}{t \land t' \rightarrow t}
\qquad
\sosrule{\neg t' \up}{t \land t' \rightarrow t'}
\qquad
\sosrule{\neg t \down}{t \lor t' \rightarrow t}
\qquad
\sosrule{\neg t' \down}{t \lor t' \rightarrow t'}
\]
The first four deduction rules for $\rightarrow$ are introduced to flatten the nesting hierarchy of the same connective. They can be used to deduce that $X \land (Y \land Z) \rightarrow Y$. The latter four deduction rules describe the dependencies in case there is no flattening possible anymore (by absence of structure). For example $X \land Y \rightarrow X$ is derived by means of the first of these deduction rules.

\begin{exmp}
The proposition formula $(X \land (Z \lor (Y \lor X))) \land Z$ results in the following structure graph fragment. The subgraphs generated by the equations for $X$, $Y$, and $Z$ are omitted from this example.

\begin{tikzpicture}[->,>=stealth',node distance=100pt]
\tikzstyle{every state}=[draw=none,minimum size=20pt, inner sep=2pt, shape=rectangle]

\node[state] (f) {$(X \land (Z \lor (Y \lor X))) \land Z\ \text{\footnotesize{$\up$}}$};
\node[state] (X)  [right of=f] {$X$};
\node[state] (ssf) [below right of=f, yshift=30pt] {$Z \lor (Y \lor X)\ \text{\footnotesize{$\down$}}$};
\node[state] (Z) [below left of=f, yshift=30pt] {$Z$};
\node[state] (Y) [right of=ssf] {$Y$};

\draw (f) edge (X) edge (Z) edge (ssf)
      (ssf)  edge (Z) edge (Y) edge (X);
\end{tikzpicture}
\end{exmp}

It should be noted that all these predicates and transitions are defined in the context of one and the same equation system.

Finally, we present deduction rules that describe how the structure of a node representing a variable is derived from the right-hand side of the corresponding equation. The third deduction rule defines this for the case that the right-hand side is a variable, the last two deduction rule for the cases it is a proposition formula that is not a variable.

\[
\sosrule{\sigma X = t \in \mc{E} \quad t \down}{X \down}
\qquad
\sosrule{\sigma X = t \in \mc{E} \quad t \up}{X \up}
\qquad
\sosrule{\sigma X = t \in \mc{E} \quad \neg t \up \quad \neg t \down}{X \rightarrow t}
\]
\[
\sosrule{\sigma X = t \in \mc{E} \quad t \down \quad t \rightarrow u}{X \rightarrow u}
\qquad
\sosrule{\sigma X = t \in \mc{E} \quad t \up \quad t \rightarrow u}{X \rightarrow u}
\]

\begin{exmp}\label{ex:structure_graph}  An equation system (see left)
and its associated structure graph (see right). Observe that the term $X
\wedge Y$ is shared by the equations for $X$ and $Y$, and appears only
once as a node in the structure graph as an unranked node. The equation
for $Z$ is represented by term $Z$, and is decorated only by the rank of
the equation for $Z$.  The subterm $Z \vee W$ in the equation for $W$
does not appear as a separate node in the structure graph, since the
disjunctive subterm occurs within the scope of another disjunction.

\noindent
\parbox{.35\textwidth}
{
\[
\begin{array}{lcl}
\mu X &=& (X \wedge Y) \vee Z\\
\nu Y &=& W \vee (X \wedge Y)\\
\mu Z &=& Z\\
\mu W &=& Z \vee (Z \vee W)
\end{array}
\]
}
\qquad
\parbox{.55\textwidth}
{
\begin{tikzpicture}[->,>=stealth',node distance=50pt]
\tikzstyle{every state}=[shape=rectangle,draw=none,minimum width=30pt, minimum height=20pt, inner sep=2pt]

\node[state] (X) {$X\ \text{\footnotesize{$\down\ 1$}}$};
\node[state] (Z) [right of=X,xshift=30pt] {$Z\ \text{\footnotesize{$3$}}$};
\node[state] (XY) [left of=X,xshift=-30pt] {$X \wedge Y\ \text{\footnotesize{$\up$}}$};
\node[state] (Y) [below of=XY] {$Y\ \text{\footnotesize{$\down\ 2$}}$};
\node[state] (W) [below of=Z] {$W\ \text{\footnotesize{$\down\ 3$}}$};

\draw (X) edge  (Z) edge[bend right] (XY)
      (Y) edge (W) edge [bend right] (XY)
      (Z) edge [loop right] (Z)
      (XY) edge [bend right] (Y) edge [bend right] (X)
      (W) edge (Z) edge [loop right] (W);
\end{tikzpicture}
}

\qed
\end{exmp}

The structure graph associated to a formula $t$ in the context of an equation system $\mc{E}$ is denoted $\mc{G}_{\mc{E},t}$. For an equation system $\mc{E}$, let $X \in \bnd{\mc{E}}$ be the least element w.r.t.\ $\tle$. Then, the structure graph associated to $\mc{E}$, denoted by $\mc{G}_{\mc{E}}$, is the structure graph of the variable $X$ in the context of $\mc{E}$.

\begin{lem}
Let $\mc{E}$ be a non-empty closed equation system. Let $t$, $t'$, and $t''$ be arbitrary proposition formulae such that $\occ{t} \cup \occ{t'} \cup \occ{t''} \subseteq \bnd{\mc{E}}$. Then the following hold:
\[ \mc{G}_{\mc{E},(t \land t') \land t''} \bisim \mc{G}_{\mc{E},t \land (t' \land t'')},
\quad \mc{G}_{\mc{E},(t \lor t') \lor t''} \bisim \mc{G}_{\mc{E},t \lor (t' \lor t'')},
\quad \mc{G}_{\mc{E},t \land t'} \bisim \mc{G}_{\mc{E},t' \land t},
\quad \mc{G}_{\mc{E},t \lor t'} \bisim \mc{G}_{\mc{E},t' \lor t}
\]
\end{lem}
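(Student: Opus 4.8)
The plan is to establish all four bisimilarities uniformly, by regarding every structure graph $\mc{G}_{\mc{E},s}$ as the part --- reachable from $s$ --- of one fixed global system: the set of all proposition formulae in the context of $\mc{E}$, equipped with the globally defined relation $\to$ and decoration $d$. On this global system the identity relation is trivially a bisimulation. Hence, in each of the four cases, it suffices to show that the two root formulae $s_1$ and $s_2$ carry the same decoration and have the same set of $\to$-successors; then $R = \Id \cup \{(s_1,s_2)\}$ is a bisimulation relating $s_1$ and $s_2$, since every successor is matched by itself through $\Id$, and this yields $\mc{G}_{\mc{E},s_1} \bisim \mc{G}_{\mc{E},s_2}$. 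The hypothesis $\occ{t}\cup\occ{t'}\cup\occ{t''}\subseteq\bnd{\mc{E}}$ guarantees that these graphs are well-defined over a common universe.

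The crux is a \emph{successor characterization} for compound formulae. For a formula $x$, put $\mathsf{flat}_\land(x) = \{u \mid x \to u\}$ when $x\up$, and $\mathsf{flat}_\land(x) = \{x\}$ otherwise. Reading off the four deduction rules for $\to$ on conjunctions, I would show that the set of $\to$-successors of $t\land t'$ is exactly $\mathsf{flat}_\land(t)\cup\mathsf{flat}_\land(t')$: the two flattening rules contribute the successors of an operand precisely when that operand carries $\up$, while the two direct-successor rules contribute the operand itself precisely when it does not; these cases are mutually exclusive and jointly exhaustive. In particular, when an operand is itself a conjunction the edges flatten through it, so that intermediate node does not appear as a successor. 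The dual statement, with $\down$ and an analogous $\mathsf{flat}_\lor$, holds for $t\lor t'$.

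From this the four results are immediate. Both $t\land t'$ and $t'\land t$ are conjunctions, hence each is decorated $\{\up\}$ (being neither a constant, a disjunction, nor a variable, no other decoration applies), and by commutativity of $\cup$ their successor sets $\mathsf{flat}_\land(t)\cup\mathsf{flat}_\land(t')$ and $\mathsf{flat}_\land(t')\cup\mathsf{flat}_\land(t)$ coincide. For associativity, since $t\land t'$ is a conjunction we have $\mathsf{flat}_\land(t\land t') = \mathsf{flat}_\land(t)\cup\mathsf{flat}_\land(t')$, and symmetrically $\mathsf{flat}_\land(t'\land t'') = \mathsf{flat}_\land(t')\cup\mathsf{flat}_\land(t'')$; hence both $(t\land t')\land t''$ and $t\land(t'\land t'')$ carry decoration $\{\up\}$ and, by associativity of $\cup$, share the successor set $\mathsf{flat}_\land(t)\cup\mathsf{flat}_\land(t')\cup\mathsf{flat}_\land(t'')$. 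The two disjunctive statements follow by the dual argument with $\down$ and $\mathsf{flat}_\lor$.

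With matching decorations and successor sets in hand, $R = \Id \cup \{(s_1,s_2)\}$ meets the bisimulation conditions for each pair of roots, completing all four cases. I expect the main obstacle to be the successor characterization: one must check carefully that the case split on whether an operand carries $\up$ (respectively $\down$) leaves no successor unaccounted for, and that flattening genuinely bypasses intermediate conjunction (respectively disjunction) nodes, so that, for instance, $t\land t'$ is not itself a successor of $(t\land t')\land t''$. Everything else reduces to the associativity and commutativity of set union.
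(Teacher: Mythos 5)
Your proof is correct and takes essentially the same approach as the paper: the paper also exhibits a bisimulation consisting of the identity relation together with the pair(s) of reassociated (resp.\ commuted) formulae, and verifies the transfer conditions by case analysis on which deduction rule produced a transition, relying on exactly your observation that the flattening rules make the successors of the two roots coincide literally so that they are matched by the identity. Your explicit $\mathsf{flat}_\land$/$\mathsf{flat}_\lor$ successor-set characterisation is merely a cleaner packaging of the case analysis the paper leaves as ``all other cases are similar.''
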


\begin{proof}
The proofs are easy. For example, the bisimulation relation that witnesses
bisimilarity of $(t \land t') \land t''$ and $t \land (t' \land t'')$
is the relation that relates all formulae of the form $(u \land u')
\land u''$ and $u \land (u' \land u'')$ and additionally contains the
identity relation on formulae. Proofs of the ``transfer conditions''
are easy as well. As an example, suppose that $(u \land u') \land u''
\rightarrow v$ for some formula $v$. In case this transition is due to
$u \land u' \up$ and $u \land u' \rightarrow v$, one of the cases that occurs for $u
\land u' \rightarrow v$ is that $u \up$ and $u \rightarrow v$. We obtain
$u \land (u' \land u'') \rightarrow v$. Since $v$ and $v$ are related,
this finishes the proof of the transfer condition in this case. All
other cases are similar or at least equally easy.
\end{proof}

Idempotency of $\land$ and $\lor$, and more involved logical equivalences
such as distribution and absorption are not captured by isomorphism or
even bisimilarity on the structure graphs. The reason is that a right-hand
side $X \land X$ will be decorated by $\up$, whereas a right-hand side
$X$ is not!

\begin{thm}
\label{thm:com}
Let $\mc{E}$ be a non-empty, closed equation system, and let
$\mc{E'}$ be the equation system obtained by transforming
$\mc{G}_{\mc{E}}$ into an equation system.
Then there is a
total bijective mapping $h : \bnd{\mc{E}} \to \bnd{\mc{E}'}$
such that for all $X \in \bnd{\mc{E}}$:
$\sem{\mc{E}}{}(X) = \sem{\mc{E}'}{}(h(X))$.
\end{thm}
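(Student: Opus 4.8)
The plan is to realise $\mc{E}'$ as the end-point of a short chain of solution-preserving rewrites applied to $\mc{E}$, and to read off the bijection $h$ from the correspondence between the bound variables of $\mc{E}$ and the ranked nodes of $\mc{G}_{\mc{E}}$. Concretely, I would factor the passage from $\mc{E}$ to $\mc{E}'$ into three conceptually independent steps: (a) replacing each right-hand side by the formula that $\mathit{rhs}$ reconstructs from $\mc{G}_{\mc{E}}$, which I claim is \emph{logically equivalent} to the original right-hand side; (b) fixing the fixed-point signs according to the parity of the ranks; and (c) reordering the equations by increasing rank. Each step preserves the solution of every bound variable, and their composition yields exactly $\mc{E}'$, so that $\sem{\mc{E}}(X) = \sem{\mc{E}'}(h(X))$.

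First I would set up the bijection. The rank-decoration rule assigns a rank only to nodes in $\bnd{\mc{E}}$, and since $\rank{}$ is a function it assigns each such node a single rank; for a closed system every bound variable moreover occurs as a (reachable) node, so the ranked nodes of $\mc{G}_{\mc{E}}$ are exactly $\bnd{\mc{E}}$. As Definition~\ref{transSGtoBES} introduces one equation per ranked node, the assignment $h(X) = X_u$, with $u$ the node $X$, is a total bijection $\bnd{\mc{E}} \to \bnd{\mc{E}'}$. I would also record here the standing requirement that $\mc{G}_{\mc{E}}$ be BESsy, so that $\mathit{term}$ and $\mathit{rhs}$ are well defined; the clause ``every cycle contains a ranked node'' is precisely what guarantees that the recursion defining $\mathit{term}$ on the unranked nodes terminates.

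The heart of the argument, and the step I expect to be the main obstacle, is (a): showing that for each equation $\sigma X = f$ of $\mc{E}$ the reconstructed formula $\mathit{rhs}(X)$ is logically equivalent to $f$ after identifying every bound variable with its $h$-image. I would prove the stronger invariant that $\mathit{term}(u) \equiv u$ for every node $u$, by induction on the term structure underlying the reachable part of $\mc{G}_{\mc{E}}$. The interesting cases are the $\up$- and $\down$-nodes: for $u = t \land t'$ the successor set collects exactly the flattened top-level conjuncts of $u$, so that $\mathit{term}(u) = \bigsqcap\{\mathit{term}(u') \mid u \rightarrow u'\}$ is, by the induction hypothesis, a conjunction of formulae equivalent to those conjuncts. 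Here associativity and commutativity are absorbed by the ordered $\bigsqcap$/$\bigsqcup$ notation, while idempotency is absorbed by the fact that successors form a \emph{set}, so that a repeated operand such as the second $Z$ in $Z \lor (Z \lor W)$ of Example~\ref{ex:structure_graph} simply collapses; crucially, the flattening halts at variable, $\top$- and $\perp$-nodes, so genuine dependencies between distinct bound variables are never merged. Once $\mathit{term}(u) \equiv u$ is established, $\mathit{rhs}(X) \equiv f$ follows by one further case distinction on the shape of $f$, and the principle — immediate from Definition~\ref{def:solution_es} — that replacing a right-hand side by a logically equivalent one leaves $\sem{\cdot}$ unchanged discharges step (a).

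It remains to justify (b) and (c). Step (b) is automatic: the ranks decorating $\mc{G}_{\mc{E}}$ are the ranks computed in $\mc{E}$, and since $\rank{X}$ is odd exactly when $X$ is a $\mu$-variable, the signs chosen in Definition~\ref{transSGtoBES} coincide with those of $\mc{E}$. For step (c) I would first observe, by a straightforward induction, that the rank function is monotonically non-decreasing along $\mc{E}$ (the number of sign-alternations counted from the left can only grow); hence $\mc{E}$ is already ordered by rank and the reordering in Definition~\ref{transSGtoBES} merely permutes equations \emph{within} maximal blocks of equal rank. Since any two equations of equal rank — together with everything lying between them, which by monotonicity also carries that rank — may be interchanged by Lemma~\ref{lem:switching}, every such permutation is solution-preserving. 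Composing (a), (b) and (c) then yields $\sem{\mc{E}}(X) = \sem{\mc{E}'}(h(X))$ for all $X \in \bnd{\mc{E}}$, as required.
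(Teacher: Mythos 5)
Your overall strategy is the one the paper intends: its proof of Theorem~\ref{thm:com} consists of nothing more than the bijection $h(Y) = X_Y$ and the assertion that the solution equalities ``follow from the construction'', and your decomposition --- reconstructing right-hand sides via $\mathit{term}$/$\mathit{rhs}$ modulo associativity, commutativity and top-level idempotency, recovering the signs from the parity of the ranks, and justifying the rank-based reordering by the monotonicity of $\rank{}$ along $\tle$ together with Lemma~\ref{lem:switching} --- supplies exactly the detail the paper omits. Steps (b) and (c) are correct as you state them.

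There is, however, a genuine gap in step (a): the claim that ``the flattening halts at variable, $\top$- and $\perp$-nodes'' is not what the deduction rules say. A variable $Y$ whose own right-hand side is a conjunction is decorated $\up$ (by the rule with premises $\sigma Y = t \in \mc{E}$ and $t \up$), so in a context $Y \land t'$ the flattening rule with positive premise $t\up$ fires and the rule with premise $\neg t \up$ does not: the node $Y \land t'$ acquires the \emph{successors} of $Y$ rather than $Y$ itself, and likewise an equation $\sigma X = Y$ makes $X$ inherit $Y$'s decoration and successors. In these cases $\mathit{rhs}(X)$ is not propositionally equivalent to the original right-hand side; it is the result of substituting $Y$'s defining formula for $Y$. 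Your appeal to ``replacing a right-hand side by a logically equivalent one leaves $\sem{\cdot}$ unchanged'' then no longer discharges the step, and what is needed instead is a Gauss-elimination-style substitution lemma --- which is sound only for substituting equations that occur further to the right, a side condition the construction does not obviously guarantee (compare $(\nu Y = X)\,(\mu X = Y)$ versus $(\nu Y = X)\,(\mu X = X)$). To be fair, the paper's own figures (Example~\ref{ex:structure_graph}, and the application where $Y_{s_0} \rightarrow Y_{s_1}$ is drawn even though $Y_{s_1}$ is decorated $\down$) are computed under \emph{your} reading, so the mismatch is arguably between the paper's rules and its intent; but as the rules stand your invariant $\mathit{term}(u) \equiv u$ is false, and the proof needs either a restriction of the flattening rules to non-variable operands or an explicit substitution argument. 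A smaller point: totality and bijectivity of $h$ require every bound variable to occur as a ranked node of $\mc{G}_{\mc{E}}$; closedness gives $\occ{\mc{E}} \subseteq \bnd{\mc{E}}$ but not reachability from the $\tle$-least variable, so your parenthetical justification does not hold as stated (the paper glosses over this as well).
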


\begin{proof}
The mapping $h$ that maps variable $Y \in \bnd{\mc{E}}$ to the variable
$X_Y \in \bnd{\mc{E'}}$ is such a total bijective mapping. The equalities
$\sem{\mc{E}}{}(X) = \sem{\mc{E}'}{}(h(X))$ (for $X \in \bnd{\mc{E}}$)
follow from the construction described in Section \ref{subsec:sg}.
\end{proof}

Next, we study the relationship between the structure graphs as used
for the Boolean equation systems in SRF and Boolean equation systems.
Given a total order $\lessdot$ on $\mc{X}$, there is an embedding $\hbar$
of formulae in the syntax of the right-hand sides of equation systems
in SRF in formulae in the syntax of equation systems.
\[
\begin{array}{lcl}
\hbar(X) = X; \\
\hbar(\bigwedge \{ X\}) = X \land X; & \qquad & \hbar(\bigvee \{ X\}) = X \lor X; \\
\hbar(\bigwedge F) = \min(F) \land \hbar(\bigwedge F\setminus\{\min(F)\});
&& \hbar(\bigvee F) = \min(F) \lor \hbar(\bigvee F\setminus\{\min(F)\}); \\
\end{array}
\]
where $F \subseteq \mc{X}$ such that $|F| \geq 2$  and $\min(F)$ denotes
the least element of $F$ w.r.t\ $\lessdot$. This embedding is easily
lifted to the level of the equation systems themselves.  Note that
an artefact of the above transformation is that the right-hand side
$\bigwedge \{X,Y\}$, assuming that $X \lessdot Y$, is transformed into
$X \land (Y \land Y)$.

\begin{thm}
For an equation system $\mc{E}$ in SRF, $\mc{G}_{\mc{E}} \bisim \mc{G}_{\hbar(\mc{E})}$.
\end{thm}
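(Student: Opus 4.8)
The plan is to exhibit an explicit bisimulation between the two structure graphs, the natural candidate being the identity on proposition variables. First I would observe that $\hbar$ changes neither the fixed-point signs nor the left-to-right order of the equations---it only rewrites right-hand sides---so that $\bnd{\mc{E}} = \bnd{\hbar(\mc{E})}$, the $\tle$-least bound variable is the same in both systems and serves as the common initial node, and $\rank{Y}$ agrees in $\mc{E}$ and $\hbar(\mc{E})$ for every $Y \in \bnd{\mc{E}}$. Consequently the rank decoration of each variable node is identical in $\mc{G}_{\mc{E}}$ and $\mc{G}_{\hbar(\mc{E})}$, so it remains to compare the $\up$/$\down$ decorations and the $\rightarrow$-successors.

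The heart of the argument is a flattening lemma, which I would prove by induction on $|F|$: for every non-empty $F \subseteq \mc{X}$, in the context of $\hbar(\mc{E})$ the formula $\hbar(\bigwedge F)$ is decorated by $\up$ (and by none of $\down, \top, \perp$) and satisfies $\hbar(\bigwedge F) \rightarrow u$ iff $u \in F$; symmetrically $\hbar(\bigvee F)$ is decorated $\down$ with successor set exactly $F$. In the base case $F = \{X\}$ we have $\hbar(\bigwedge\{X\}) = X \land X$, which is $\up$ by the rule for $\land$ and whose only successor is $X$ (both conjuncts are variables, hence not $\up$, so the base rule $\neg t \up \Rightarrow t \land t' \rightarrow t$ applies; as $\rightarrow$ is a set, the repeated conjunct contributes $X$ only once). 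In the inductive step $\hbar(\bigwedge F) = \min(F) \land \hbar(\bigwedge F')$ with $F' = F \setminus \{\min(F)\}$: the base rule yields the edge to the variable $\min(F)$, while the induction hypothesis gives $\hbar(\bigwedge F') \up$ with successor set $F'$, so the flattening rule $t' \up,\ t' \rightarrow u \Rightarrow t \land t' \rightarrow u$ lifts precisely the edges to $F'$; together these produce successor set $F$ and nothing else. The point exploited throughout is that a variable, viewed as a conjunct or disjunct, is never decorated $\up$ or $\down$, so flattening halts exactly at variables and the nested formulae introduced by $\hbar$ are bypassed.

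With the flattening lemma in hand I would compare the graphs node-by-node on variables. For an equation $\sigma X = \bigwedge F$ of $\mc{E}$ the SRF rules decorate $X$ with $\up$ and give $X \rightarrow u$ iff $u \in F$; in $\hbar(\mc{E})$ the equation reads $\sigma X = \hbar(\bigwedge F)$, and combining the variable rules $\sigma X = t,\ t \up \Rightarrow X \up$ and $\sigma X = t,\ t \up,\ t \rightarrow u \Rightarrow X \rightarrow u$ with the flattening lemma yields the same decoration and the same successor set $F$. The disjunctive case is symmetric, and for a single-variable right-hand side $\sigma X = X'$ we have $\hbar(X') = X'$, which is atomic, so in both graphs $X$ carries only its rank and has the single successor $X'$. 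Since SRF admits no Boolean constants, these three cases are exhaustive, whence every reachable variable node carries identical decorations and identical successor sets in the two graphs.

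It then follows that $R = \{(Y,Y) \mid Y \in \bnd{\mc{E}}\text{ reachable from the initial node}\}$ is a bisimulation: the decoration condition holds by the node-by-node comparison, and each transfer condition holds because the successor sets coincide and $R$ is the identity. As $R$ relates the common initial node to itself, this witnesses $\mc{G}_{\mc{E}} \bisim \mc{G}_{\hbar(\mc{E})}$; the intermediate formula nodes generated by $\hbar$ never occur as transition targets and are therefore unreachable, so they play no role. I expect the main obstacle to be the flattening lemma, and within it the discipline of keeping the syntactic decorations $\up, \down$ (which drive flattening and stop at variables) apart from the decorations that variable nodes themselves receive, together with the set-valued reading of $\rightarrow$ that prevents repeated conjuncts such as in $X \land X$ from creating spurious successors.
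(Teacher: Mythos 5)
Your overall strategy---take the graph of $\hbar$, which is the identity on variables, as an explicit bisimulation, and reduce everything to a flattening lemma stating that $\hbar(\bigwedge F)$ is an $\up$-node whose successor set is exactly $F$---is precisely the paper's, which disposes of the theorem in one line by declaring ``$\hbar$ as a relation'' to be the required bisimulation. The gap is inside the flattening lemma, and it sits exactly where you yourself flag the delicacy. You assert that a variable, viewed as a conjunct, is never decorated $\up$, and the induction needs this: the edge $\hbar(\bigwedge F) \rightarrow \min(F)$ is produced by the rule whose premise is $\neg\, \min(F) \up$. But two paragraphs later you invoke the rule $\sigma X = t \in \mc{E},\ t\up \Rightarrow X\up$ to conclude that every variable of $\hbar(\mc{E})$ bound by a conjunctive equation \emph{is} decorated $\up$ (you need this to match the node labels of the SRF graph). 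These two claims contradict each other, and under the deduction rules as literally given the second is the correct one. Consequently, whenever some $Y \in F$ is itself bound by a $\bigwedge$-equation, $Y\up$ holds, the negative-premise rule does not fire, and the rule $t\up,\ t \rightarrow u \Rightarrow t \land t' \rightarrow u$ absorbs $Y$ instead: the successor set of $\hbar(\bigwedge F)$ then contains $Y$'s successors rather than $Y$ itself. For $(\mu X = \bigwedge\{Y\})\,(\mu Y = \bigwedge\{X\})$ the least-fixed-point reading even leaves $X$ and $Y$ with no successors at all in $\mc{G}_{\hbar(\mc{E})}$, while $X \rightarrow Y \rightarrow X$ in $\mc{G}_{\mc{E}}$; so the lemma, and with it the transfer condition, fails as stated.

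You correctly sense the way out---keeping the ``syntactic'' occurrences of $\up,\down$ that drive flattening apart from the decorations that variable nodes receive---but the deduction system has only one predicate $\up$, used both for node labels and in the premises of the flattening rules, so this separation cannot simply be posited; it is the substance of what must be argued (or the rules must be re-read so that the flattening premises test whether a term is syntactically a conjunction or disjunction). The paper is not innocent here either: its own worked example derives a self-loop $W \rightarrow W$ from $\mu W = Z \vee (Z \vee W)$, which requires $Z \vee W \rightarrow W$ even though $W$ is decorated $\down$, so the paper's example also tacitly assumes that flattening halts at variables. A correct write-up must either amend the flattening rules accordingly or prove a substitute lemma that accounts for the absorption; as it stands, the step ``both conjuncts are variables, hence not $\up$'' is false and the induction collapses.
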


\begin{proof}
The mapping $\hbar$ (as a relation) is a bisimulation relation that proves $\mc{G}_{\mc{E}} \bisim \mc{G}_{\hbar(\mc{E})}$.
\end{proof}

\section{Preservation and Reflection of Solution under Bisimilarity}
\label{Sect:Preservation}

In the previous section, we showed that there is a natural correspondence
between structure graphs for equation systems in SRF and their
dependency graphs.  We tighten this result by showing that bisimilarity
on structure graphs is a congruence for normalisation (a process similar
to the transformation of an equation system into SRF). As a consequence,
this result allows us to reuse the results of~\cite{KeirenWillemse2009a}
and prove that for each pair of bisimilar nodes,
both nodes have the same truth-value.

\subsection{$\true$/$\false$-Elimination}

Since we strive to reuse some of the results pertaining the dependency graphs, it is useful to define an operator on equation systems that replaces occurrences of nodes representing $\true$ or $\false$ by nodes representing proposition variables. This operator $\mathsf{reduce}$ is defined by the following deduction rules.

\[ \sosrule{t\up}{\reduce{t} \up}
\qquad
\sosrule{t\down}{\reduce{t} \down}
\qquad
\sosrule{\neg t \top \quad \neg t \perp \quad t \rightarrow u}{\reduce{t} \rightarrow \reduce{u}}
\qquad
\sosrule{\neg t \top \quad \neg t \perp \quad t \pitchfork n}{\reduce{t} \pitchfork n}
\]
\[
\sosrule{t \top}{\reduce{t} \rightarrow \reduce{t}}
\qquad
\sosrule{t \top}{\reduce{t} \pitchfork 0}
\qquad
\sosrule{t \perp}{\reduce{t} \rightarrow \reduce{t}}
\qquad
\sosrule{t \perp}{\reduce{t} \pitchfork 1}
\]
Observe that bisimilarity is a congruence for $\mathsf{reduce}$, and
that the operation preserves and reflects the solution of the original
equation system.  In the remainder of Section~\ref{Sect:Preservation},
we assume that operation $\mathsf{reduce}$ has been applied to all
equations systems.

\subsection{Normalisation}

In structure graphs underlying an equation system, terms that are
decorated by ranks typically occur as left-hand sides in equations,
whereas the non-ranked terms occur as subterms in right-hand sides of
equations with mixed occurrences of $\wedge$ and $\vee$. Normalisation
of an equation system can be achieved by introducing a new equation
for subterms in which the top-level Boolean operator differs from the
top-level operator of the term it occurs in: an equation $\sigma X =
Y \wedge (Z \vee W)$ in an equation system is turned into $\sigma X
= Y \wedge Z'$, and an additional equation $\sigma' Z' = Z \vee W$
is introduced in the equation system. In choosing the location (or,
formally, its rank) of this new equation, one has some degree of freedom
(see Lemma~\ref{lem:moving_equation}). We choose to assign a rank based on
the rank that is assigned to its successors in the structure graph. This
is formalised by the following set of deduction rules:

\[
\sosrule{t \up}{\normalise{t}\up}
\qquad
\sosrule{t \down}{\normalise{t}\down}
\qquad
\sosrule{t \rightarrow u}{\normalise{t} \rightarrow \normalise{u}}
\qquad
\sosrule{t \pitchfork n}{\normalise{t} \pitchfork n}
\]
\[
\sosrule{t \not\pitchfork \quad t \rightarrow u \quad \normalise{u} \pitchfork n \quad \forall_{v}~ t \rightarrow v \Rightarrow \normalise{v} \pitchfork m \land m \leq n}
        {\normalise{t} \pitchfork n}
 \]
The last deduction rule expresses that in case a node $t$ does not have
a rank, a rank is associated to the normalised version of $t$. This
rank is the maximal rank of all successors of $t$ (after ranking these
as well). Observe that, by construction, a non-ranked node can not have a transition to itself.
Note that the premise of the last deduction rule is not only
expressed in terms of transitions, predicates and negative versions
thereof, but also utilises logical connectives and even a universal
quantification. Syntax and semantics of such complex premises are taken
from~\cite{WeerdenburgReniers2008}.

Normalisation typically preserves and reflects the solution to an equation
system, in the sense that the Boolean value of all proposition variables,
bound in the original equation system, remains unchanged by the
operation (modulo naming of the proposition variables). This is formalised
by the lemma below. $\mc{G}_{\normalise{\mc{E}}}$ denotes the structure graph $\mc{G}_{\mc{E},\normalise{X}}$ where $X$ is the least variable w.r.t.\ $\tle$.
\begin{lem}
\label{lem:normalisation}
Let $\mc{E}$ be a non-empty, closed equation system, and let
$\mc{E}_\textsf{norm}$ be the equation system obtained by transforming
$\mc{G}_{\normalise{\mc{E}}}$ into an equation system. Then there is a
total injective mapping $h : \bnd{\mc{E}} \to \bnd{\mc{E}_\textsf{norm}}$
such that for all $X \in \bnd{\mc{E}}$:
$\sem{\mc{E}}{}(X) = \sem{\mc{E}_\textsf{norm}}{}(h(X))$.
\end{lem}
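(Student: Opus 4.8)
The lemma claims that normalising an equation system preserves and reflects solutions. The statement is about the BES $\mathcal{E}_{\mathsf{norm}}$ obtained by reading off an equation system from the structure graph $\mathcal{G}_{\mathsf{norm}(\mathcal{E})}$, and asserts existence of an injective $h:\mathsf{bnd}(\mathcal{E})\to\mathsf{bnd}(\mathcal{E}_{\mathsf{norm}})$ preserving solutions.

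Let me think carefully about what normalisation does and how to prove this.

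**Setup.** We have a closed, non-empty equation system $\mathcal{E}$. We've already applied `reduce`. The normalisation deduction rules create a structure graph where previously-unranked subterm nodes (those with mixed $\wedge/\vee$ nesting, like $Z\vee W$ inside a conjunctive equation) now get assigned a rank — namely the max rank among their successors. Reading off a BES from this graph (via Definition 3.7, `transSGtoBES`) means: every ranked node becomes an equation $\sigma X_u = \mathsf{rhs}(u)$.

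So the effect on the actual equation system is: each such formerly-unranked subterm $u$ (with top operator differing from its parent) becomes a *new bound variable* $X_u$ with its own equation $\sigma X_u = \mathsf{rhs}(u)$, and the parent equation now references $X_u$ instead of the inlined subterm. This is exactly the "abstraction of mixed subterms" described in prose: $\sigma X = Y\wedge(Z\vee W)$ becomes $\sigma X = Y\wedge Z'$ plus $\sigma' Z' = Z\vee W$.

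**What the proof needs.** Two things: (1) $h$ (the inclusion of old bound variables into new ones) is injective and well-defined; (2) solutions agree: $\sem{\mathcal{E}}(X) = \sem{\mathcal{E}_{\mathsf{norm}}}(h(X))$.

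The injectivity/well-definedness is essentially structural: old bound variables remain ranked nodes (rule $t\pitchfork n \Rightarrow \mathsf{norm}(t)\pitchfork n$ preserves existing ranks), and distinct old variables map to distinct new nodes. The hard part is (2).

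**Plan for the semantic equality.** My plan would be to prove this by relating the solution of $\mathcal{E}_{\mathsf{norm}}$ back to $\mathcal{E}$ via a sequence of solution-preserving transformations, each justified by the already-proven lemmas. The key engine is Lemma 2.3 (moving equations) and Lemma 2.9 (switching equal-rank equations).

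Here is the idea. Each newly introduced equation $\sigma' X_u = \mathsf{rhs}(u)$ is for a subterm $u$ all of whose occurring variables are *already bound* (closedness), and whose rank is defined as $\max$ of successor ranks. I want to argue that $\mathcal{E}_{\mathsf{norm}}$ can be transformed — by repeatedly *eliminating* the new equations via substitution — back into (a renaming of) $\mathcal{E}$, without changing solutions. Eliminating a new equation $\sigma' X_u = g_u$ means substituting $g_u$ for $X_u$ everywhere; this is sound precisely when $X_u \notin \mathsf{occ}(g_u)$ (the subterm has no self-reference — guaranteed since it's a freshly-introduced *sub*formula) and the rank choice respects ordering. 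The rank assignment ($n = \max$ over successors, and the $\sigma'$ sign matching parity of that max) is designed exactly so that Lemma 2.3's hypothesis — that the equation can be relocated/merged without the fixed-point sign mattering — is met: since $X_u\notin\mathsf{occ}(g_u)$, Lemma 2.3 tells us the fixed-point sign of $X_u$'s equation is immaterial and the equation can be moved adjacent to its point of use, after which it is simply an abbreviation that can be inlined.

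**Carrying it out.** First I would set up $h$ as the map sending each $Y\in\mathsf{bnd}(\mathcal{E})$ to its corresponding ranked node (which Lemma 3.5-style reasoning shows survives normalisation as a ranked node), and verify injectivity from the fact that the rank-preserving rule keeps distinct bound variables distinct. Second, I would establish the central claim that $\mathsf{rhs}(\mathsf{norm}(X))$ in $\mathcal{E}_{\mathsf{norm}}$, after inlining all the auxiliary $X_u$-equations, is logically the same formula (modulo associativity/commutativity, which by the earlier congruence lemma does not affect the structure graph and hence not the solution) as the original right-hand side of $X$ in $\mathcal{E}$. Third, I would invoke Lemma 2.3 repeatedly to remove each auxiliary equation: because each auxiliary variable does not occur in its own right-hand side, its sign is free, and because its rank was chosen as the maximum over its successors, it can legitimately be moved to a position where inlining is valid. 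A finite induction on the number of introduced auxiliary equations then reduces $\mathcal{E}_{\mathsf{norm}}$ to $\mathcal{E}$ up to renaming, giving the solution equality.

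**Main obstacle.** The delicate point — and where I expect the real work — is justifying that the *specific rank* chosen by the last deduction rule (the max over normalised successors) makes the relocation/inlining via Lemma 2.3 genuinely sound, i.e.\ that no auxiliary variable gets placed in a way that would force a fixed-point sign interaction with a variable of a *different* rank that it depends on. One must check that assigning the max-successor rank places each auxiliary equation at or below the level of every variable it references, so that the "all occurring variables are bound by preceding equations" hypothesis of Lemma 2.3 holds after reordering by rank. The potential trap is a cyclic dependency among unranked subterms; the remark "by construction, a non-ranked node can not have a transition to itself" is precisely what rules this out and must be leaned on. Establishing that the global rank-ordering of $\mathcal{E}_{\mathsf{norm}}$ is consistent with eliminating auxiliaries one at a time — formalising the induction order — is the crux; once that is in place, the individual solution-preservation steps are routine applications of Lemmas 2.3 and 2.9.
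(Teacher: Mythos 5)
Your proposal is correct and follows essentially the same route as the paper: the authors' own proof is a one-line appeal to the deduction rules together with Lemma~\ref{lem:moving_equation} (relocating an equation and freely choosing its sign when it has no self-occurrence) and Lemma~\ref{lem:switching} (permuting equal-rank equations), which is exactly the machinery you identify and elaborate. Your expansion --- the inclusion map $h$ on ranked nodes, the elimination of the auxiliary equations by inlining, and the observation that the max-of-successors rank and the acyclicity of unranked nodes are what make the relocation sound --- fills in the details the paper leaves implicit rather than departing from its argument.
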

\begin{proof} The proof follows from the deduction rules, in combination with
Lemmata~\ref{lem:moving_equation} and~\ref{lem:switching}.\end{proof}

\begin{exmp} The structure graph of the equation system of
Example~\ref{ex:structure_graph} (see left below), and the structure graph of
the normalisation of the same equation system (see right below). Observe that
the term $Y \wedge X$, which, before normalisation is unranked, inherits
the maximal rank of successors $Y$ and $X$.

\begin{center}
\parbox{.43\textwidth}
{
\begin{tikzpicture}[->,>=stealth',node distance=50pt]
\tikzstyle{every state}=[shape=rectangle,draw=none,minimum width=30pt, minimum height=20pt, inner sep=2pt]

\node[state] (X) {$X\ \text{\footnotesize{$\down\ 1$}}$};
\node[state] (Z) [right of=X,xshift=30pt] {$Z\ \text{\footnotesize{$3$}}$};
\node[state] (XY) [left of=X,xshift=-30pt] {$X \wedge Y\ \text{\footnotesize{$\up$}}$};
\node[state] (Y) [below of=XY] {$Y\ \text{\footnotesize{$\down\ 2$}}$};
\node[state] (W) [below of=Z] {$W\ \text{\footnotesize{$\down\ 3$}}$};

\draw (X) edge  (Z) edge[bend right] (XY)
      (Y) edge (W) edge [bend right] (XY)
      (Z) edge [loop above] (Z)
      (XY) edge [bend right] (Y) edge [bend right] (X)
      (W) edge (Z) edge [loop below] (W);

\end{tikzpicture}
}
$\Longrightarrow$
\parbox{.51\textwidth}
{
\begin{tikzpicture}[->,>=stealth',node distance=50pt]
\tikzstyle{every state}=[shape=rectangle,draw=none,minimum width=30pt, minimum height=20pt, inner sep=2pt]

\node[state] (X) {$\normalise{X}\ \text{\footnotesize{$\down\ 1$}}$};
\node[state] (Z) [right of=X,xshift=30pt] {$\normalise{Z}\ \text{\footnotesize{$3$}}$};
\node[state] (XY) [left of=X,xshift=-30pt] {$\normalise{X \wedge Y}\ \text{\footnotesize{$\up\ 2$}}$};
\node[state] (Y) [below of=XY] {$\normalise{Y}\ \text{\footnotesize{$\down\ 2$}}$};
\node[state] (W) [below of=Z] {$\normalise{W}\ \text{\footnotesize{$\down\ 3$}}$};

\draw (X) edge  (Z) edge[bend right] (XY)
      (Y) edge (W) edge [bend right] (XY)
      (Z) edge [loop above] (Z)
      (XY) edge [bend right] (Y) edge [bend right] (X)
      (W) edge (Z) edge [loop below] (W);

\end{tikzpicture}
}
\end{center}
According to Lemma~\ref{lem:normalisation}, the equations for
nodes $X$ and $\normalise{X}$ have the same solution.
\qed
\end{exmp}

\begin{lem}
\label{lem:norm_preserv_bisimilarity}
Let $\mc{E}$ and $\mc{E}'$ be non-empty closed equation systems. If $\mc{G}_{\mc{E}}$ and $\mc{G}_{\mc{E}'}$ are bisimilar, then also $\mc{G}_{\normalise{\mc{E}}}$ and $\mc{G}_{\normalise{\mc{E}'}}$ are bisimilar.
\end{lem}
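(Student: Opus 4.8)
The plan is to lift a given bisimulation through the normalisation map. Let $R$ be a bisimulation witnessing $\mc{G}_{\mc{E}} \bisim \mc{G}_{\mc{E}'}$, so $R$ relates the two initial nodes. I would show that
\[
R' = \{ (\normalise{t}, \normalise{t'}) \mid (t,t') \in R \}
\]
is a bisimulation witnessing $\mc{G}_{\normalise{\mc{E}}} \bisim \mc{G}_{\normalise{\mc{E}'}}$. Since the nodes of $\mc{G}_{\normalise{\mc{E}}}$ are exactly the terms $\normalise{t}$ with $t$ a node of $\mc{G}_{\mc{E}}$, and likewise on the primed side, $R'$ relates the initial nodes and is a relation between the correct carrier sets. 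The two transfer conditions are immediate: the only rule creating dependency edges in a normalised graph derives $\normalise{t} \rightarrow \normalise{u}$ from $t \rightarrow u$, so the edges out of $\normalise{t}$ are in one-to-one correspondence with those out of $t$. Hence, given $\normalise{t} \rightarrow \normalise{u}$ with $(t,t') \in R$, the bisimulation $R$ yields a matching $u'$ with $t' \rightarrow u'$ and $(u,u') \in R$, whence $\normalise{t'} \rightarrow \normalise{u'}$ and $(\normalise{u}, \normalise{u'}) \in R'$; the symmetric direction is identical.

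The substance lies in the decoration condition, which requires that $\normalise{t}$ and $\normalise{t'}$ carry identical decorations for every $(t,t') \in R$. The normalisation rules that copy $\up$, copy $\down$, and copy a pre-existing rank $n$ transfer these labels verbatim from $t$ to $\normalise{t}$; since $t$ and $t'$ carry the same decoration, the $\up$/$\down$ part and every rank already present agree automatically. The sole new decoration is produced by the last rule, which assigns to a rankless node $t$ (one with $t \not\pitchfork$) the rank $\rho(t)$, defined as the maximum of the normalised ranks $\rho(u)$ over the successors $u$ of $t$. Thus the decoration condition reduces to the single claim that this freshly assigned rank is bisimulation-invariant: $\rho(t) = \rho(t')$ whenever $(t,t') \in R$ and $t \not\pitchfork$.

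This claim looks circular, since $\rho(t)$ is defined through the $\rho$-values of successors that are themselves only related by $R$; the resolution is a well-founded induction. Every rankless reachable node is a compound subformula, and each of its successors is a proper subformula (the flattening rules redirect to a successor of a conjunct or disjunct, the remaining rules to a direct operand), so $\rightarrow$ restricted to rankless nodes is acyclic---equivalently, by BESsyness every cycle passes through a ranked node. This makes $\rho$ well defined (the maximum ranges over a nonempty finite set reached after finitely many rankless steps) and furnishes a well-founded measure, for instance the length of the longest $\rightarrow$-path through rankless nodes. I would then prove $\rho(t) = \rho(t')$ by induction on the maximum of this measure at $t$ and at $t'$. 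If $t$ (hence $t'$) carries a rank $n$, then $\rho(t) = n = \rho(t')$. Otherwise $t$ and $t'$ are rankless; $R$ matches each successor of $t$ with an $R$-related successor of $t'$ and conversely, every such matched pair has strictly smaller measure, and the induction hypothesis gives equal $\rho$-values on matched successors. Therefore $\{\rho(u) \mid t \rightarrow u\} = \{\rho(u') \mid t' \rightarrow u'\}$, and taking maxima yields $\rho(t) = \rho(t')$. The main obstacle is precisely this last normalisation rule: unlike the purely local label-copying rules, its premise quantifies over all successors and inspects \emph{their} normalised ranks, so decoration equality cannot be checked pointwise but must be bootstrapped along the successor structure---and it is the acyclicity of the rankless fragment, guaranteed by BESsyness, that makes the bootstrapping terminate.
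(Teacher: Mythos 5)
Your proposal is correct and takes essentially the same route as the paper, whose entire proof is the one-line assertion that a bisimulation witnessing $\mc{G}_{\mc{E}} \bisim \mc{G}_{\mc{E}'}$ ``induces'' a witness for $\mc{G}_{\normalise{\mc{E}}} \bisim \mc{G}_{\normalise{\mc{E}'}}$ --- precisely your lifted relation $R'$. You additionally supply the one genuinely non-trivial step that the paper leaves implicit, namely that the freshly assigned rank of a rankless node is bisimulation-invariant, argued by well-founded induction over the acyclic rankless fragment; this is exactly the right justification.
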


\begin{proof}
Any bisimulation relation $\rel$ witnessing
$\mc{G}_{\mc{E}} \bisim \mc{G}_{\mc{E}'}$ induces a witness for
$\mc{G}_{\normalise{\mc{E}}} \bisim \mc{G}_{\normalise{\mc{E}'}}$.
\end{proof}

\begin{lem}
\label{lem:bisimilarity_srf_non-srf}
Let $\mc{E}$ be a non-empty closed equation system. Then there is an
equation system $\mc{E}'$ in SRF with
$\mc{G}_{\normalise{\mc{E}}} \bisim \mc{G}_{\mc{E}'}$.
\end{lem}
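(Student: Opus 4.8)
The plan is to realise $\mc{G}_{\normalise{\mc{E}}}$ as the structure graph of an explicitly constructed equation system $\mc{E}'$ in SRF, essentially applying the transformation of Definition~\ref{transSGtoBES} but flattening every right-hand side into the $\bigwedge F$/$\bigvee F$ shape that SRF prescribes. The starting observation is that $\mc{G}_{\normalise{\mc{E}}}$ has a very regular shape. Since $\reduce{}$ has been applied there are no $\top$- or $\perp$-decorated nodes, and in a non-empty closed system every node has at least one successor; hence the last normalisation rule ranks every node that was not already ranked, so in $\mc{G}_{\normalise{\mc{E}}}$ every node carries exactly one rank and is either $\up$, or $\down$, or a \emph{plain} node with a single successor. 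Moreover each successor of a node is itself ranked, i.e.\ behaves like a bound variable. This last rank assignment is not circular: subterm edges always descend the formula tree, so every cycle passes through an original (already ranked) variable node, and the subterm ranks can be read off bottom-up.

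First I would fix, for every node $u$ of $\mc{G}_{\normalise{\mc{E}}}$, a fresh proposition variable $Z_u$, and read its equation off $d(u)$: take $\sigma Z_u = \bigwedge\{Z_v \mid u \to v\}$ when $\up \in d(u)$, $\sigma Z_u = \bigvee\{Z_v \mid u \to v\}$ when $\down \in d(u)$, and $\sigma Z_u = Z_v$ for the unique successor $v$ otherwise. Because every successor is a ranked node, each right-hand side is a single variable or a finite, non-empty $\bigwedge$/$\bigvee$ over variables, so $\mc{E}'$ is genuinely in SRF. I would order the equations by increasing rank and pick $\sigma$ to be $\mu$ for odd and $\nu$ for even ranks, exactly as in Definition~\ref{transSGtoBES}, and then check that $\mathsf{rank}$ recomputes precisely the rank decorating each $u$; this is where the BESsy property that the reachable ranks form a closed interval starting at $0$ or $1$ is used, to guarantee that the rank-ordered layout is well defined and faithful. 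With the ranks in place, the node-to-variable map $R = \{(u,Z_u)\}$ is then seen to be a bisimulation between $\mc{G}_{\normalise{\mc{E}}}$ and $\mc{G}_{\mc{E}'}$: by construction $d(u)=d'(Z_u)$, and the SRF deduction rules yield $Z_u \up$ (resp.\ $Z_u \down$) exactly when $\up \in d(u)$ (resp.\ $\down \in d(u)$) and $Z_u \to Z_v$ exactly when $u \to v$. Lemma~\ref{lem:SGBESSRF} guarantees that $\mc{G}_{\mc{E}'}$ has no nodes beyond the $Z_u$, so $R$ is in fact a decoration-preserving isomorphism on reachable nodes and \emph{a fortiori} a bisimulation.

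The one genuinely delicate step, which I expect to be the main obstacle, is making the \emph{roots} correspond. The root of $\mc{G}_{\mc{E}'}$ is by definition its $\tle$-least variable, and since ranks are non-decreasing along any equation system this variable necessarily has minimal rank; so I must arrange that the variable for the root $\normalise{X}$ of $\mc{G}_{\normalise{\mc{E}}}$ can legitimately serve as this $\tle$-least element. The available degrees of freedom are the permutability of equal-rank equations, provided by Lemma~\ref{lem:switching}, together with the closed-interval structure of the reachable ranks; using these I would position $Z_{\normalise{X}}$ first within its rank class without disturbing any other rank, so that $(\normalise{X},Z_{\normalise{X}})$ lies in $R$ and relates the two roots. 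Verifying that this repositioning is always possible — and, more precisely, that $\normalise{X}$ indeed sits in the minimal rank class — is the crux of the argument and the step I would spend the most care on; once it is settled, the remaining transfer conditions are a routine matching against the SRF and normalisation deduction rules.
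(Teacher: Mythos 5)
Your construction is the same one the paper intends: its entire proof consists of the remark that the translation of Definition~\ref{transSGtoBES}, applied to $\mc{G}_{\normalise{\mc{E}}}$, lands in SRF because after $\mathsf{reduce}$ and normalisation every node is ranked and carries at most one of $\up$, $\down$, and the graph is BESsy. Your elaboration of this --- the well-foundedness of the bottom-up rank assignment on subterm nodes, the singleton-$F$ case, and the need for the closed-interval property so that $\mathsf{rank}$ recomputed on $\mc{E}'$ reproduces the decorations --- is correct and fills in what the paper leaves implicit.

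The step you flag as the crux and do not discharge is, however, a genuine gap, and it is exactly where the construction can break. You need the root $\normalise{X}$ to lie in the \emph{minimal} rank class of the whole graph, so that $Z_{\normalise{X}}$ can be moved to the front by Lemma~\ref{lem:switching} without disturbing the rank-ordered layout. This holds for the nodes coming from $\bnd{\mc{E}}$ (ranks are non-decreasing along $\mc{E}$) and for the normalisation-ranked subterm nodes (they inherit the \emph{maximum} of their successors' ranks), but it fails for the nodes introduced by $\mathsf{reduce}$, which hard-codes rank $0$ for $\top$-nodes and rank $1$ for $\perp$-nodes. Consider $\mc{E} = (\mu X = (Y \lor Z) \land \true)\ (\nu Y = Y)\ (\mu Z = Z)$: here $\normalise{X}$ is decorated $\{\up,1\}$ and has a successor decorated $\{0\}$ (the reduced $\true$-node). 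Any SRF system whose $\tle$-least variable matches $\{\up,1\}$ must begin with a $\mu$-equation, whence \emph{every} bound variable gets rank at least $1$ and no node of its structure graph can carry rank $0$; so no permutation within rank classes rescues the root correspondence, and your relation $R$ cannot relate the two initial nodes. The repair is to make $\mathsf{reduce}$ assign the $\top$-node the least even rank that is at least $\rank{X}$ (dually, the least odd such rank for $\perp$), which preserves its solution and restores the closed-interval and minimality properties your argument needs. The paper's own proof is silent on the root issue altogether, so you have located a real soft spot rather than missed an argument the paper supplies.
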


\begin{proof}
The structure graph of $\mc{E}$ is easily transformed into an equation
system in SRF as described previously. Observe that, since all nodes of
$\mc{G}_{\normalise{\mc{E}}}$ are ranked, each equation $(\sigma X = f)$
in $\mc{E}'$ has a right-hand side formula $f$ with at most one type of
Boolean operator, and the structure graph of a non-empty, closed equation
system is BESsy by definition.
\end{proof}

\subsection{Bisimilarity Implies Solution Equivalence}

The theorem below states our main result, proving that equations that
induce bisimilar structure graphs essentially have the same solution. This
allows one to safely use bisimulation minimisation of the underlying
structure graph of an equation system, and solve the resulting equation
system instead. The proof of this theorem relies on the connections
between normalisation, equation systems in SRF, and the results
of~\cite{KeirenWillemse2009a}.

\begin{thm}
Let $\mc{E}$ and $\mc{E}'$ be non-empty, closed equation systems. Then
for every pair of bisimilar formulas $f$ w.r.t.\ $\mc{G}_{\mc{E}}$ and $f'$ w.r.t.\ $\mc{G}_{\mc{E}'}$, also $\sem{f}{}\sem{\mc{E}}{} = \sem{f'}{}\sem{\mc{E}'}{}$.
\end{thm}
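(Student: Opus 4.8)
The plan is to transport the statement, through a chain of transformations that each preserve both solution and bisimilarity, into the setting of equation systems in SRF, where it is already known from \cite{KeirenWillemse2009a} that bisimilar nodes of the (dependency) graph carry the same solution. I read $\sem{f}{}\sem{\mc{E}}{}$ as the truth value of the node $f$ under the solution environment of $\mc{E}$, and I read bisimilarity of $f$ and $f'$ as a bisimulation relating the roots of the rooted structure graphs $\mc{G}_{\mc{E},f}$ and $\mc{G}_{\mc{E}',f'}$. Since in this section $\mathsf{reduce}$ is assumed to have been applied, no reachable node is decorated by $\top$ or $\perp$, so every leaf reachable from $f$ (resp.\ $f'$) is a proposition variable and $\mathit{term}$ is defined on all these nodes; this is precisely what brings us into the $\true/\false$-free world of dependency graphs.

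First I would normalise. The construction underlying Lemma~\ref{lem:norm_preserv_bisimilarity} operates on an arbitrary witnessing bisimulation and, since the normalisation rules are local, relates $\normalise{u}$ to $\normalise{u'}$ whenever $u$ is related to $u'$; applied to the bisimulation witnessing $f \bisim f'$ it yields $\normalise{f} \bisim \normalise{f'}$ as nodes of $\mc{G}_{\normalise{\mc{E}}}$ and $\mc{G}_{\normalise{\mc{E}'}}$. The purpose of this step is the observation, already used in the proof of Lemma~\ref{lem:bisimilarity_srf_non-srf}, that every node of a normalised structure graph is ranked; consequently $\normalise{f}$ and $\normalise{f'}$ are bound variables, say $X_{\normalise{f}}$ and $X_{\normalise{f'}}$, of the equation systems $\mc{E}_\textsf{norm}$ and $\mc{E}'_\textsf{norm}$ in SRF obtained by transforming $\mc{G}_{\normalise{\mc{E}}}$ and $\mc{G}_{\normalise{\mc{E}'}}$. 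By the identification of structure graphs of SRF systems with dependency graphs (Lemma~\ref{lem:SGBESSRF} and the subsequent isomorphism lemma), $\normalise{f} \bisim \normalise{f'}$ is precisely bisimilarity of the corresponding dependency-graph nodes, so the main result of \cite{KeirenWillemse2009a} gives $\sem{\mc{E}_\textsf{norm}}{}(X_{\normalise{f}}) = \sem{\mc{E}'_\textsf{norm}}{}(X_{\normalise{f'}})$.

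It then remains to tie the value of the original node back to this solution, i.e.\ to establish $\sem{f}{}\sem{\mc{E}}{} = \sem{\mc{E}_\textsf{norm}}{}(X_{\normalise{f}})$ and, symmetrically, $\sem{f'}{}\sem{\mc{E}'}{} = \sem{\mc{E}'_\textsf{norm}}{}(X_{\normalise{f'}})$; chaining the three equalities then closes the proof. For this I would invoke Lemma~\ref{lem:normalisation}, whose solution-preservation rests (via Lemmata~\ref{lem:moving_equation} and~\ref{lem:switching}) on the freedom to reorder and re-sign the freshly introduced equations, together with the fact that the $\mathit{rhs}$/$\mathit{term}$ reconstruction is semantically faithful: the flattening and de-duplication of successors appeal only to associativity, commutativity and idempotence, which are genuine Boolean identities even though the last is not reflected by bisimilarity — the same observation that underpins Theorem~\ref{thm:com}. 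Hence $\sem{\mathit{term}(\normalise{f})}{\eta} = \sem{f}{\eta}$ for every $\eta$, and reading this off under $\sem{\mc{E}_\textsf{norm}}{}$ yields the desired equality.

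The main obstacle I anticipate is exactly this last link, since Lemma~\ref{lem:normalisation} is phrased for bound variables $X \in \bnd{\mc{E}}$, whereas the node $f$ need not be bound: it may be a compound subformula such as $X \wedge Y$ that acquires a rank only during normalisation. The delicate point is to check that the fresh equation $\sigma X_{\normalise{f}} = \mathit{rhs}(\normalise{f})$ created for such a node genuinely has solution $\sem{f}{}\sem{\mc{E}}{}$ under the ranking-based ordering prescribed in Definition~\ref{transSGtoBES}. I would discharge this by extending the induction behind Lemma~\ref{lem:normalisation} to \emph{all} reachable nodes rather than only to left-hand sides, using that a non-ranked node can have no self-loop (as remarked after the normalisation rules) so that each new equation can be placed consistently with its assigned rank without disturbing the value of $f$. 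The remaining steps are then routine applications of the cited lemmas.
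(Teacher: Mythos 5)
Your proposal follows essentially the same route as the paper's own proof: normalisation preserves bisimilarity (Lemma~\ref{lem:norm_preserv_bisimilarity}), the normalised structure graphs are identified with structure graphs of equation systems in SRF (Lemma~\ref{lem:bisimilarity_srf_non-srf}), Theorem~1 of \cite{KeirenWillemse2009a} gives equality of solutions there, and Lemma~\ref{lem:normalisation} transfers the equality back to $f$ and $f'$. The one point where you go beyond the paper --- observing that Lemma~\ref{lem:normalisation} is stated only for bound variables while $f$ may be an unranked compound node, and sketching how to extend the induction to all reachable nodes --- is a legitimate refinement of a step the paper's proof leaves implicit, not a divergence in approach.
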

\begin{proof}
By Lemma~\ref{lem:norm_preserv_bisimilarity}, it follows that
for each pair $f$ w.r.t.\ $\mc{G}_{\mc{E}}$ and $f'$ w.r.t.\ $\mc{G}_{\mc{E}'}$
of bisimilar nodes, the nodes $\normalise{f}$ and $\normalise{f'}$ are
bisimilar. As a consequence of Lemma~\ref{lem:bisimilarity_srf_non-srf},
we find that there must exist a closed equation system in SRF with a
structure graph that is bisimilar to $\normalise{f}$.  Likewise for
$\normalise{f'}$.  Since $\bisim$ is an equivalence relation, the
structure graphs of the equation systems in SRF are again bisimilar. By
Theorem~1 in~\cite{KeirenWillemse2009a}, we find that this implies that
$\normalise{f}$ and $\normalise{f'}$ have the same solution. Since
normalisation preserves and reflects the solution of the original
equation system, see Lemma~\ref{lem:normalisation}, we find that $f$
and $f'$ have the same solution.
\end{proof}

\section{Application}
\label{Sect:Application}

Equation systems with non-trivial right-hand sides (\ie, equation systems
with equations with right-hand sides containing both $\wedge$ and $\vee$)
occur naturally in the context of process equivalence checking problems
such as the branching bisimulation problem (see \eg~\cite{CPPW:07})
and the more involved model checking problems.  As a slightly more
elaborate example of the latter, we consider a $\mu$-calculus model
checking problem involving an unreliable channel. The channel can read
messages from the environment, and send or lose these next.  In case
the message is lost, subsequent attempts are made to send the message
until this finally succeeds. The labeled transition system, modelling
this system is given below.

\begin{center}
\begin{tikzpicture}[->,>=stealth',node distance=50pt]
\tikzstyle{every state}=[minimum size=15pt, inner sep=2pt, shape=circle]

\node [state,accepting] (r) {\small $s_0$};
\node [state] (s) [right of=r] {\small $s_1$};
\node [state] (l) [right of=s] {\small $s_2$};

\draw
  (r) edge[bend left] node [above] {\small $r$} (s)
  (s) edge[bend left] node [below]    {\small $s$} (r)
  (s) edge[bend left] node [above]    {\small $\tau$} (l)
  (l) edge[bend left] node [below]    {\small $l$} (s);
\end{tikzpicture}
\end{center}

Suppose we wish to know for which states it holds whether along all
paths consisting of reading and sending actions, it is infinitely often
possible to potentially never perform a send action. Intuitively, this
should be the case in all states: from states $s_0$ and $s_1$, there
is a finite path leading to state $s_1$, which can subsequently produce
the infinite path $(s_1\ s_2)^\omega$, along which the send action does
not occur.  For state $s_2$, we observe that there is no path consisting
of reading and writing actions, so the property holds vacuously in $s_2$.
We formalise this problem as follows:\footnote{Alternative phrasings
are possible, but this one nicely projects onto an equation system with
non-trivial right-hand sides, clearly illustrating the theory outlined
in the previous sections in an example of manageable proportions.}
\[
\phi \equiv
\nu X. \mu Y. ( ([r]X \wedge [s]X \wedge (\nu Z. \langle \overline{s} \rangle Z) ) \vee ([r]Y \wedge [s]Y))
\]
Using the translation of Mader~\cite{Mad:97} of the model checking problem into equation
systems, the equation system given below is obtained. The solution to $X_{s_i}$
answers whether $s_i \models \phi$.
$$
\begin{array}{l}
(\nu X_{s_0} = Y_{s_0})\
(\nu X_{s_1} = Y_{s_1})\
(\nu X_{s_2} = Y_{s_2})\\

(\mu Y_{s_0} = (X_{s_1} \wedge Z_{s_0}) \vee Y_{s_1})\
(\mu Y_{s_1} = (X_{s_0} \wedge Z_{s_1}) \vee Y_{s_0})\
(\mu Y_{s_2} = \true)\\

(\nu Z_{s_0} = Z_{s_1})\
(\nu Z_{s_1} = Z_{s_2})\
(\nu Z_{s_2} = Z_{s_1})
\end{array}
$$
The structure graph underlying the above equation system, restricted to
those parts reachable from the bound variables of the equation system,
is depicted below:
\begin{center}
\begin{tikzpicture}[->,>=stealth',node distance=65pt]
\tikzstyle{every state}=[draw=none,minimum size=15pt, inner sep=2pt, shape=rectangle]

\node [state] (Xs0)                            {\small $X_{s_0}$ \footnotesize $0$};
\node [state] (Ys0)          [left of=Xs0]     {\small $Y_{s_0}$ \footnotesize $\down\ 1$};
\node [state] (Xs1Zs0)       [above of=Ys0,yshift=-20pt]    {\small $X_{s_1} \wedge Z_{s_0}$ \footnotesize $\up$};
\node [state] (Ys1)          [below of=Ys0,yshift=20pt]    {\small $Y_{s_1}$ \footnotesize $\down\ 1$};
\node [state] (Xs1)          [left of=Ys0]     {\small $X_{s_1}$ \footnotesize $0$};
\node [state] (Xs0Zs1)       [right of=Ys1]    {\small $X_{s_0} \wedge Z_{s_1}$ \footnotesize $\up$};
\node [state] (Zs0)          [right of=Xs1Zs0] {\small $Z_{s_0}$ \footnotesize $2$};
\node [state] (Zs1)          [right of=Xs0]    {\small $Z_{s_1}$ \footnotesize $2$};
\node [state] (Zs2)          [right of=Zs1]    {\small $Z_{s_2}$ \footnotesize $2$};
\node [state] (Xs2)          [below of=Zs1,yshift=20pt]    {\small $X_{s_2}$ \footnotesize $0$};
\node [state] (Ys2)          [right of=Xs2]    {\small $Y_{s_2}$ \footnotesize $1$};
\node [state] (Wtrue)        [right of=Ys2]    {\small $\true$ \footnotesize $\top$};

\draw
   (Xs0) edge (Ys0)
   (Ys0) edge (Xs1Zs0) edge[bend left] (Ys1)
   (Ys1) edge[bend left] (Ys0) edge (Xs0Zs1)
   (Xs1Zs0) edge (Xs1)
   (Xs1) edge (Ys1)
   (Xs0Zs1) edge (Xs0) edge (Zs1)
   (Xs1Zs0) edge (Zs0)
   (Zs0) edge (Zs1)
   (Zs1) edge[bend left] (Zs2)
   (Zs2) edge[bend left] (Zs1)
   (Xs2) edge (Ys2)
   (Ys2) edge (Wtrue)
   ;
\end{tikzpicture}
\end{center}
Observe that we have $X_{s_0} \bisim X_{s_1}$, $Z_{s_0} \bisim Z_{s_1}
\bisim Z_{s_2}$, $Y_{s_0} \bisim Y_{s_1}$ and $X_{s_0} \wedge
Z_{s_1} \bisim X_{s_1} \wedge Z_{s_0}$.  Minimising the above structure
graph with respect to bisimulation leads to the structure graph depicted
below:

\begin{center}
\begin{tikzpicture}[->,>=stealth',node distance=100pt]
\tikzstyle{every state}=[draw=none,minimum size=15pt, inner sep=2pt, shape=rectangle]

\node [state] (Xs0_r)                            {\small $[X_{s_0}]_{/\bisim}$ \footnotesize $0$};
\node [state] (Ys0_r)      [right of=Xs0_r]  {\small $[Y_{s_0}]_{/\bisim}$ \footnotesize $\down\ 1$};
\node [state] (Xs1Zs0_r)   [right of=Ys0_r]  {\small $[X_{s_1} \wedge Z_{s_0}]_{/\bisim}$ \footnotesize $\up$};
\node [state] (Xs2_r)        [below of=Xs0_r,yshift=50pt]    {\small $[X_{s_2}]_{/\bisim}$ \footnotesize $0$};
\node [state] (Ys2_r)        [right of=Xs2_r]    {\small $[Y_{s_2}]_{/\bisim}$ \footnotesize $1$};
\node [state] (Zs0_r)      [right of=Xs1Zs0_r] {\small $[Z_{s_0}]_{/\bisim}$ \footnotesize $2$};
\node [state] (Wtrue)        [right of=Ys2_r]    {\small $\true$ \footnotesize $\top$};

\draw
   (Xs0_r) edge (Ys0_r)
   (Ys0_r) edge (Xs1Zs0_r) edge[loop above] (Ys0_r)
   (Xs1Zs0_r) edge (Zs0_r) edge[bend left] (Xs0_r)
   (Xs2_r) edge (Ys2_r)
   (Zs0_r) edge [loop above] (Zs0_r)
   (Ys2_r) edge (Wtrue);
\end{tikzpicture}
\end{center}
Note that the structure graph is BESsy, and, hence, admits a translation
back to an equation system. Using the translation provided in
Definition~\ref{transSGtoBES} results in the following equation system:
\[
\begin{array}{l}
(\nu X_{[X_{s_0}]_{/\bisim}} = X_{[Y_{s_0}]_{/\bisim}})\
(\nu X_{[X_{s_2}]_{/\bisim}} = X_{[Y_{s_2}]_{/\bisim}})\\
(\mu X_{[Y_{s_0}]_{/\bisim}} = (X_{[X_{s_0}]_{/\bisim}} \wedge X_{[Z_{s_0}]_{/\bisim}}) \vee X_{[Y_{s_0}]_{/\bisim}})\
(\mu X_{[Y_{s_2}]_{/\bisim}} = \true )\\
(\nu X_{[Z_{s_0}]_{/\bisim}} = X_{[Z_{s_0}]_{/\bisim}})
\end{array}
\]
Answering the global model checking problem can thus be achieved
by solving 6 equations rather than the original 9 equations. Using
standard algorithms for solving equation systems, one quickly finds
that all variables $X_{[V]_{/\bisim}}$ of the minimised equation system
(and thus all nine original proposition variables) have value $\true$.
Note that the respective sizes of the equation systems are 26 before
minimisation and 14 after minimisation, which is slightly less than
a 50\% gain. Such gains appear to be typical in this setting (see
also~\cite{KeirenWillemse2009a}), and seem to surpass those in the setting
of labeled transition systems; observe, moreover, that the original
labeled transition system already is minimal, demonstrating once more
that the minimisation of an equation system can be more effective than
minimising the original labeled transition system.

\section{Conclusions}
\label{Sect:Conclusions}

We presented a set of deduction rules for projecting the essential
information underlying Boolean equation systems onto so-called
\emph{structure graphs}. These graphs generalise the dependency graphs
of~\cite{Kei:06,KeirenWillemse2009a} which capture the dependencies of
closed equation systems in \emph{standard recursive form (SRF)}. We
showed that a minimisation of closed equation systems can be achieved
through a bisimulation minimisation of the underlying structure graphs,
and that this minimisation is sound: the minimised equation system
reflects and preserves the solution of the original equation system.
This generalises the results of~\cite{KeirenWillemse2009a}, in which
minimisation was possible only after bringing the closed equation system
into SRF.  The practical significance of minimisation of closed equation
systems in SRF was already addressed in~\cite{KeirenWillemse2009a}.

The work we have presented serves as a starting point for further
investigations. While we have not studied the problem of minimisation
of \emph{open} equation systems, extending our work in this direction
is not likely to raise problems of any significance, as our structure
graphs (with only small modifications) seems adequate for capturing
and reasoning about unbound variables.  Both from a theoretical
and a practical point of view, the study of weaker equivalences on
structure graphs is of importance. It is not fully clear whether the
\emph{idempotence-identifying bisimilarity} of~\cite{KeirenWillemse2009a},
which weakens some of the requirements of strong bisimilarity, carries
over to structure graphs without significant modifications. For equation
systems in SRF, this type of bisimulation solves, among others, the
idempotency problem that equations $\sigma X = X \wedge X$ and $\sigma X =
X$ are unrelated by strong bisimulation.  Furthermore, it would be very
interesting to study variations of stuttering equivalence in this context.

Finally, we consider a thorough understanding of the structure graphs
for BESs, and the associated notions of bisimilarity defined thereon, as
a first step towards defining similar-spirited notions in the setting of
\emph{parameterised Boolean equation systems}~\cite{GW:05b}. The latter
are high-level, symbolic descriptions of Boolean equation systems.
The advantage of such a theory would be that it would lead to elegant,
short proofs of various PBES manipulations that currently require lengthy
and tedious (transfinite) inductive proofs.

\paragraph{Acknowledgements.} We thank Jeroen Keiren for his valuable comments on a preliminary version.

\end{document}